\documentclass[12pt,letterpaper]{article}

\usepackage[margin=1in]{geometry}

\usepackage{amsfonts, amsmath, amssymb, amsthm}
\usepackage{mathrsfs} 

\usepackage[dvipsnames]{xcolor}
\usepackage{hyperref}
\hypersetup{
    colorlinks=true,
    linkcolor=RedOrange,
    citecolor=MidnightBlue,
    urlcolor=MidnightBlue,
}
\usepackage{graphicx}           
\usepackage{algorithm}
\usepackage{algorithmic}
\usepackage{authblk}
\usepackage{subfig}
\usepackage{physics}

\usepackage[citestyle=alphabetic,bibstyle=authoryear]{biblatex}
\addbibresource{ref.bib}

\theoremstyle{plain}
\newtheorem{theorem}{Theorem} 
\newtheorem{lemma}{Lemma} 
\newtheorem{proposition}{Proposition}

\newtheorem{assumption}{Assumption}
\theoremstyle{remark}
\newtheorem{definition}{Definition}	
\newtheorem{remark}{Remark}



\newcommand{\cB}{\mathcal{B}}
\newcommand{\cC}{\mathcal{C}}

\newcommand{\cH}{\mathcal{H}}

\newcommand{\cT}{\mathcal{T}}

\newcommand{\cV}{\mathcal{V}}


\newcommand{\bX}{\mathbf{X}}

\newcommand{\bZ}{\mathbf{Z}}



\newcommand{\N}{\mathbb{N}}
\newcommand{\R}{\mathbb{R}}

\DeclareMathOperator*{\E}{\mathbb{E}} 
\renewcommand{\P}{\mathbb{P}}

\DeclareMathOperator*{\argmin}{arg\,min} 

\renewcommand{\epsilon}{\varepsilon}
\newcommand{\indep}{\mathrel{\perp\!\!\!\perp}}


\providecommand{\keywords}[1]{\small \textbf{\textit{Keywords---}} #1}

\title{A Convexified Matching Approach to Imputation and Individualized Inference}
\author[1]{YoonHaeng Hur}
\author[1]{Tengyuan Liang\thanks{\tt Liang acknowledges the generous support from the NSF Career Grant (DMS-2042473), and the William Ladany Faculty Fellowship from the University of Chicago Booth School of Business.}}
\affil[1]{University of Chicago}

\begin{document}

\maketitle

\begin{abstract}
	We introduce a new convexified matching method for missing value imputation and individualized inference inspired by computational optimal transport. Our method integrates favorable features from mainstream imputation approaches: optimal matching, regression imputation, and synthetic control. We impute counterfactual outcomes based on convex combinations of observed outcomes, defined based on an optimal coupling between the treated and control data sets. The optimal coupling problem is considered a convex relaxation to the combinatorial optimal matching problem. We estimate granular-level individual treatment effects while maintaining a desirable aggregate-level summary by properly constraining the coupling. We construct transparent, individual confidence intervals for the estimated counterfactual outcomes. We devise fast iterative entropic-regularized algorithms to solve the optimal coupling problem that scales favorably when the number of units to match is large. Entropic regularization plays a crucial role in both inference and computation; it helps control the width of the individual confidence intervals and design fast optimization algorithms.
\end{abstract}

\keywords{Missing value imputation, optimal coupling, matching, propensity score, individualization, entropic regularization.}


\section{Introduction}
\label{sec-introduction}
One central topic in applied econometric research is to assess the effects of policy interventions reliably. On the one hand, to analyze nonexperimental data, advanced econometric estimates are devised to provide granular counterfactual answers by leveraging specific structural models. In an influential paper in 1986 \cite{lalonde1986evaluating}, LaLonde questioned whether such sophisticated econometric estimates are credible. He compared them with the experimental benchmarks on some coarse summary---for example, the average treatment effect (ATE)---and concluded unfavorably. On the other hand, when experimental data, as in randomized controlled trials, are available, elementary statistical estimates can determine the ATE and eliminate confounding explanations. The ATE estimate, however, does not answer whether the treatment works for an individual. In modern applications such as personalized medicine and online marketing, the treatment effects vary across individuals; the treatment might be beneficial for some individuals but ineffective for others \cite{liu2016there}. One potentially costly approach is to conduct individualized experiments in different time windows, known as N-of-1 trials \cite{hill1961principles,liang2023randomization}. Therefore, it is desirable to develop nonexperimental methods that conform to the coarse estimates---for example, those for ATE or the average treatment effect on the treated (ATT)---at the aggregate level, while delivering more granular, individualized inference.

Individualized inference is about imputing missing counterfactual outcomes---the outcomes that would have been observed had the subjects received an opposite treatment---and their uncertainty quantifications. Three popular approaches are available in the literature: matching, regression imputation, and synthetic control. This paper proposes a new convexified matching method by integrating these three approaches. The central quantity we play is a coupling matrix between the treated and control data clouds, which resembles matching and is used to synthesize the counterfactual outcomes as in classic nonparametric regression.

Matching \cite{rubin2006matched,rosenbaum2020modern} is a widely adopted way to impute counterfactual outcomes. For each subject, the counterfactual outcome is estimated by identifying units in the opposite treatment group similar to the subject in covariates and then averaging their outcomes. Due to its simplicity, nearest neighbor matching, which finds subjects closest under a suitable distance between the covariates, is favored in practice. Exact or approximate matching with high-dimensional covariates can be difficult, and thus propensity score matching was proposed as a remedy \cite{rosenbaum1983central}. There are more sophisticated matching methods to improve balance or forbid certain matches, often referred to as the optimal matching \cite{rosenbaum2020modern,zubizarreta2023handbook}; they require solving combinatorial optimization problems by network optimization techniques or mixed integer programming, which can be computationally expensive. However, even analyzing the aggregate ATE estimates resulting from combinatorial optimization problems is highly nontrivial, and these estimates may differ from simple ATE estimates based on propensity score reweighting \cite{rosenbaum1987model,hirano2001estimation,hirano2003efficient}. As in matching, we solve for a coupling matrix---a convex relaxation to the combinatorial constraints---without access to outcomes, separating the design and analysis phases in an observational study. For a large-scale matching problem with $N$ units, we devise entropic-regularized algorithms from optimal transport \cite{villani_2003,peyre2019computational} to solve convexified matching, calling a matrix scaling subroutine with a number of times nearly independent of $N$; see Section~\ref{sec-optimization}. In contrast, in a conventional matching problem, solvability within polynomial time of $N$ may not be feasible. 

Regression imputation is another approach to estimating counterfactual outcomes. The idea is simple: under the ignorability or unconfoundedness assumption \cite{rosenbaum1983central}, the underlying functions that map the covariates to the responses under the treatment and the control are identifiable and estimable based on observational data. It translates a causal inference problem into a regression problem. Parametric or nonparametric regression techniques estimate the conditional response function under the treatment and the control, given the covariates, and in turn find the conditional average treatment effect (CATE), see \cite{hahn1998role,heckman1997matching,heckman2005structural,macurdy2011flexible}. At the aggregate level, semiparametric efficient estimation of ATE leveraging regression imputation techniques is studied in \cite{hahn1998role}. Admittedly, if the quantity of interest is at the aggregate-level---a one-dimensional parameter such as ATE or ATT---the specific nonparametric regression technique matters less, as long as it estimates the function reasonably well. There has been a fruitful line of literature on the use of flexible machine learning methods in estimating CATE \cite{belloni2014inference,athey2016recursive,chernozhukov2017double,farrell2015robust,farrell2021deep,wager2018estimation}, in place of classic Nadaraya-Watson (NW) kernel nonparametric regression. Albeit naive, let us use the NW estimator as an example to illustrate a shared feature by regression imputation and matching: to compute a counterfactual outcome for a treated unit, NW finds local neighbors in the control group and uses convex weights based on the nonparametric kernel to synthesize the outcome. We will adopt this aggregation by convex weights feature into our convexified matching, whereas our coupling weights are optimized globally. Our coupling weights are determined to optimize imputation quality and individual uncertainty quantification; see Section~\ref{sec-inference}. 

Hybrid methods were proposed to combine matching/propensity score weighting \cite{rosenbaum1987model,hirano2001estimation,hirano2003efficient,li2018balancing} with regression imputation techniques to efficiently estimate ATE, notably the doubly robust estimators \cite{laan2003unified,cattaneo2010efficient,farrell2015robust,farrell2021deep,chernozhukov2022locally}. The estimator compares favorably as it is valid when either the propensity score function or the regression function is consistent, robust to bias due to misspecification. Similar in spirit, bias correction using regression to improve the nearest neighbor matching was studied in \cite{abadie2011bias,abadie2006large}. 

From a different vein, synthetic control \cite{abadie2003economic,abadie2010synthetic,abadie2021using} provides a fresh look at counterfactual imputation: a convex combination of control units may synthesize a subject under treatment better than any one of the control unit. Mathematically, a convex combination of data may simultaneously reduce the approximation error or bias, and, at the same time, reduce the variance driven by the idiosyncratic errors. This convex relaxation idea also eases the computation for the optimization problem in matching. We adopt this convex combination idea to approximate the overall covariate information under treatment by convex combinations of the covariate information under control, defined based on a data set-to-data set coupling; see Section~\ref{sec-method}.

We develop a convexified matching method that solves an optimal coupling, which in turn defines convex combinations for missing value imputation. Unlike combinatorial optimization problems, where the optimization variable must be an extreme point of a certain polytope to represent an assignment, we optimize over the whole polytope as the variable represents the weights of the convex combination. Moreover, we add an entropic regularization to the optimization, where the strength of regularization controls the bias and variance tradeoff. The resulting formulation is a smooth convex optimization problem confined to the polytope that can be solved efficiently. Lastly, by properly specifying the constraints, we can pair with propensity score weighting \cite{rosenbaum1987model,hirano2001estimation,hirano2003efficient} estimators. We can guarantee that the aggregate summary of individual treatment effects coincides with the desired ATE or ATT estimates.

We introduce an inference procedure to construct individual confidence intervals for the estimated counterfactual outcomes based on the convexified matching method. Providing a credible confidence interval around the individual treatment effect can help decide whether to adopt the treatment for each individual. The width of the confidence interval is determined by the approximation error in covariate balancing and the entropy of the convex weights, which comprise the objective function of the optimization. Therefore, how we formulate the convexified matching is directly targeted at individualized inference. Entropic regularization plays a crucial role in inference, namely, controlling the width of the individual confidence intervals, and in computation, namely, designing fast iterative algorithms to solve the optimization.

In summary, we integrate important features from matching, regression imputation, and synthetic control. We impute counterfactual outcomes by convex combinations defined based on an optimal coupling. The coupling is a convex relaxation to optimal matching and can be solved efficiently using iterative matrix scaling subroutines called the Sinkhorn algorithm \cite{sinkhorn1967diagonal,cuturi2013sinkhorn}. We estimate granular individual treatment effects while maintaining a desirable aggregate-level summary by properly constraining the coupling. We construct transparent, individual confidence intervals for the estimated counterfactual outcomes, where the optimization objective controls the width of the confidence intervals.

\paragraph{Notation} We denote by $\|\cdot\|_2$ and $\langle \cdot, \cdot \rangle$ the Euclidean norm and inner product, respectively. For a square matrix $A$, let $\mathrm{tr}(A)$ denote its trace. For matrices $A, B$ of the same dimension, $\langle A, B \rangle:= \mathrm{tr}(A^\top B)$ is the Frobenius inner product. For any integer $n \in \N$, let $1_n = (1, \ldots, 1) \in \R^n$ denote the vector whose entries are all $1$. Let $\Delta_n := \{a \in \R_+^n : \sum_{i = 1}^{n} a_i = 1\}$ denote the simplex of probability vectors and let $\Delta_n^+ := \{a \in \Delta_n : a_i > 0 ~~ \forall i\}$ denote the interior of $\Delta_n$.

\section{Imputation Method: Synthetic Coupling}
\label{sec-method}
We first introduce Kernel Synthetic Coupling (KSC), a convexified matching method for missing value imputation. Later in Section~\ref{sec-inference}, we build individual confidence intervals around the imputed values. To streamline the exposition, we cast the KSC method following the notations in the potential outcome framework \cite{neyman1923,rubin1974estimating} for a binary treatment---a leading example for missing value imputation. 

We consider $N$ units indexed by $1, \ldots, N$, where each unit is associated with two potential outcome variables $Y_i(1), Y_i(0) \in \R$ under treatment and control, respectively, and a covariate vector $x_i \in \R^d$. We can observe only one of $Y_i(1), Y_i(0)$ depending on the treatment assignment $Z_i = 1$ or $Z_i = 0$, denoting that unit $i$ is treated or not. Accordingly, the observed outcome $Y_i$ of unit $i$ is given as $Y_i = Z_i Y_i(1) + (1 - Z_i) Y_i(0)$. We let $\cT = \{i \in [N] : Z_i= 1\}$ and $\cC = \{i \in [N] : Z_i= 0\}$ denote the sets of indices of the treated units and control units, respectively, so that $\cT \cup \cC = \{1, \ldots, N\}$. Also, let $N_t:= |\cT|$ and $N_c:= |\cC|$ denote the numbers of the treated and control units. The individual treatment effect of unit $i$ is defined as $\tau_i = Y_i(1) - Y_i(0)$, which is unobservable as only one of $Y_i(1), Y_i(0)$ can be observed.

\subsection{A Simple Formulation}
\label{sec:main_formulation}
Without loss of generality, we focus on imputing the missing potential outcomes of the treated units; counterfactual outcomes of the control units can be similarly obtained. We estimate $Y_j(0)$ by some $\widehat{Y}_j(0)$ for each treated unit $j$. This, in turn, allows us to estimate the individual treatment effect $\tau_j$ by $Y_j - \widehat{Y}_j(0) = Y_j(1) - \widehat{Y}_j(0)$. The proposed convexified matching combines two ideas. First, we want to find a match between the data cloud in the treated group and that in the control using the covariate information. Second, as in the synthetic control method, the matching quality is determined by how well it approximates the covariate of each treated unit by a convex combination of the covariates of control units. As a result, this procedure involves finding a matrix, which we call a \textit{coupling}, indexed by a pair $(i, j)$ for $i \in \cC$ and $j \in \cT$ denoting the weight assigned to the covariate of control unit $i$ to approximate the covariate of treated unit $j$. We solve the following optimization to obtain the \textit{optimal coupling}:
\begin{align}
	\min_{\pi \in \R_+^{N_c \times N_t}} \quad & \frac{1}{2 N_t} \sum_{j \in \cT} \|x_j -  \sum_{i \in \cC} \tfrac{\pi_{i j}}{1 / N_t} x_i \|_2^2 + \lambda \sum_{i \in \cC} \sum_{j \in \cT} \pi_{i j} \log \tfrac{\pi_{i j}}{e}, \label{eq:objective} \\
	\mathrm{subject~to} \quad & \sum_{i \in \cC} \pi_{i j} = 1 / N_t \quad \forall j \in \cT, \label{eq:weights_sum_to_one} \\
	& \sum_{j \in \cT} \pi_{i j} = 1 / N_c \quad \forall i \in \cC. \label{eq:weights_constraint_mean-difference}
\end{align}

The first term in the objective function \eqref{eq:objective} is the average of the squared approximation errors for covariate balancing. In the first constraint \eqref{eq:weights_sum_to_one}, for each treated unit $j$, the synthetic weights $(\tfrac{\pi_{i j}}{1 / N_t})_{i \in \cC}$ sum to $1$ and thus the term $\sum_{i \in \cC} \tfrac{\pi_{i j}}{1 / N_t} x_i$ in \eqref{eq:objective} is a convex combination to approximate $x_j$. 

The second term in \eqref{eq:objective} is an entropic regularization with regularization parameter $\lambda > 0$. Increasing the parameter $\lambda$ encourages the solution $\pi$ to be uniform, namely, $\pi_{i j} = \tfrac{1}{N_c \cdot N_t}$ for all $(i, j)$'s, which essentially corresponds to increasing the number of neighbors in the nearest neighbor matching. 

The second constraint \eqref{eq:weights_constraint_mean-difference} enforces that all control units contribute equally and thus seeks a matching between treated and control data sets; this constraint can be generalized to incorporate inverse propensity scores, to be shown in the next section. If this constraint is dropped, the optimization can be decoupled into $N_t$ independent optimization of the same type, each with one treated unit $j \in \cT$, also solvable by our optimization algorithm detailed in Section~\ref{sec-optimization}. We shall show in a second that this constraint will be crucial. It enables the convex program to synthesize granular information while enforcing, at the coarse level, that the answers agree with typical estimators for the ATE or ATT. 

Finally, the optimization variable $\pi \in \R_+^{N_c \times N_t}$ is a matrix whose entries are indexed by $(i, j)$ for $i \in \cC$ and $j \in \cT$, instead of the usual indexing by $i \in \{1, \ldots, N_c\}$ and $j \in \{1, \ldots, N_t\}$, to denote that $\pi_{i j}$ is the weight between control unit $i$ and treated unit $j$. We call $\pi \in \R_+^{N_c \times N_t}$ satisfying the constraints \eqref{eq:weights_sum_to_one} and \eqref{eq:weights_constraint_mean-difference} a \textit{coupling}, which generalizes doubly stochastic matrices to non-square matrices with prescribed row and column sums.

Upon obtaining the solution, denoted by $\widehat{\pi}$, we impute the counterfactual outcome $Y_j(0)$ of treated unit $j$ by the convex combination of the outcomes of control units with the weights $(N_t \widehat{\pi}_{i j})_{i \in \cC}$:
\begin{equation}
	\label{eqn:convex-combination-imputation}
	\widehat{Y}_j(0) := \sum_{i \in \cC}  \tfrac{\widehat{\pi}_{i j} }{ 1/N_t } Y_i,
\end{equation}
which in turn leads to the individual treatment effect estimate $\widehat{\tau}_j := Y_j - \widehat{Y}_j(0)$. By the constraint \eqref{eq:weights_constraint_mean-difference}, the average of the estimated individual treatment effects coincides with the difference in the mean outcomes between the treatment and control groups denoted as $\widehat{\tau}^{\mathsf{DiM}}$:
\begin{equation}
	\label{eq:mean_difference}
	\frac{1}{N_t} \sum_{j \in \cT} \widehat{\tau}_j = \frac{1}{N_t} \sum_{j \in \cT} (Y_j - \sum_{i \in \cC} \tfrac{\widehat{\pi}_{i j} }{ 1/N_t } Y_i) = \frac{1}{N_t} \sum_{j \in \cT} Y_j - \frac{1}{N_c} \sum_{i \in \cC} Y_i =: \widehat{\tau}^{\mathsf{DiM}}.
\end{equation}
Therefore, the proposed convexified matching allows for estimating individual treatment effects while maintaining the desired ATT estimate $\widehat{\tau}^{\mathsf{DiM}}$.

The resulting optimization is a smooth convex optimization problem. To see this, notice that the first term of the objective function \eqref{eq:objective} is the following quadratic function of $\pi$:
\begin{equation}
	\label{eq:objective_quadratic}
	\frac{N_t}{2} \langle \pi, K_{c c} \pi \rangle - \langle \pi, K_{c t} \rangle + \frac{\mathrm{tr}(K_{t t})}{2 N_t},
\end{equation}
where $K_{c c} \in \R^{N_c \times N_c}$, $K_{c t} \in \R^{N_c \times N_t}$, and $K_{t t} \in \R^{N_t \times N_t}$ are the Gram matrices whose entries are inner products of the covariates, namely, $K_{c c} = (\langle x_i, x_{i'} \rangle)_{i, i' \in \cC}$, $K_{c t} = (\langle x_i, x_j \rangle)_{(i, j) \in \cC \times \cT}$, and $K_{c, t} = (\langle x_j, x_{j'} \rangle)_{j, j' \in \cT}$. This quadratic function is convex as the Gram matrix $K_{c c}$ is positive semidefinite. The second term---the entropy term---is convex. The constraint set resulting from \eqref{eq:weights_sum_to_one} and \eqref{eq:weights_constraint_mean-difference} is a convex polytope consisting of couplings. In Section \ref{sec-optimization}, we introduce and analyze efficient algorithms to solve this optimization problem based on a simple iterative matrix scaling procedure called the Sinkhorn algorithm \cite{sinkhorn1967diagonal}. 

Lastly, we discuss several existing methods and concepts related to the proposed method. \cite{abadie2021penalized} considers multiple treated units ($N_t > 1$) and proposes running synthetic controls separately for treated units with a different regularization function. The proposed method differs from running synthetic controls separately because of the constraint \eqref{eq:weights_constraint_mean-difference}. If there is only one treated unit ($N_t = 1$) and no constraint \eqref{eq:weights_constraint_mean-difference}, the proposed method is equivalent to the standard synthetic control with entropic regularization as in \cite{hainmueller2012entropy} which proposes entropic regularization for covariate balancing. Also, notice that the first term of \eqref{eq:objective} is upper bounded under the constraint \eqref{eq:weights_sum_to_one} as follows: 
\begin{equation*}
	\frac{1}{2 N_t} \sum_{j \in \cT} \|x_j - \sum_{i \in \cC} \tfrac{\pi_{i j}}{1 / N_t} x_i \|_2^2 
	= \frac{1}{2 N_t} \sum_{j \in \cT} \| \sum_{i \in \cC} \tfrac{\pi_{i j}}{1 / N_t} (x_j - x_i)\|_2^2 
	\le \frac{1}{2 N_t} \sum_{j \in \cT} \sum_{i \in \cC} \tfrac{\pi_{i j} }{1 / N_t} \|x_j - x_i\|_2^2,
\end{equation*}
where the inequality uses Jensen's inequality. This upper bound analytically shows that the average approximation error based on convex combinations is smaller than that based on pairwise distances, which aligns with the motivation mentioned in the previous section. Replacing the first term of the objective function \eqref{eq:objective} by the above upper bound---while maintaining the constraints \eqref{eq:weights_sum_to_one}, \eqref{eq:weights_constraint_mean-difference}---leads to the optimal transport problem \cite{villani_2003} with entropic regularization \cite{cuturi2013sinkhorn}.

\subsection{Extensions}
\label{sec:extensions}
We present two extensions of the simple formulation introduced in Section \ref{sec:main_formulation}. The first is extending the main insights to cover the nonlinear case when the potential outcomes can be nonlinear functions of covariates, using a kernel trick \cite{shawe2004kernel,steinwart_christmann_2008}. The second is incorporating notions of propensity scores \cite{rosenbaum1983central} into our synthetic coupling formulation.

\paragraph{Kernel Synthetic Coupling}

We can kernelize the presented simple formulation to extend to nonlinear cases. The kernel trick is widely used in machine learning to modify methods that originated in the linear setting to accommodate nonlinearity, such as ridge regression, support vector machines, and principal component analysis. Recall that our objective \eqref{eq:objective} relies on the Gram matrices as shown in \eqref{eq:objective_quadratic}. Therefore, to kernelize the simple formulation, we can replace the Gram matrices with the kernel matrices, namely, letting $K_{c c} = (k(x_i, x_{i'}))_{i, i' \in \cC}$, $K_{c t} = (k(x_i, x_j))_{(i, j) \in \cC \times \cT}$, and $K_{t t} = (k(x_j, x_{j'}))_{j, j' \in \cT}$ for some kernel function $k \colon \R^d \times \R^d \to \R$. The kernelized formulation is essentially the same convex optimization problem. It can be tackled by the algorithms we introduce in Section~\ref{sec-optimization}. 

Applying the above kernel trick is equivalent to replacing the first term of \eqref{eq:objective} with the approximation error in the reproducing kernel Hilbert space (RKHS) $\cH$ associated with the kernel $k$, namely, $\frac{1}{2 N_t} \sum_{j \in \cT} \|\phi_{x_j} - \sum_{i \in \cC} \tfrac{\pi_{i j}}{1 / N_t} \phi_{x_i}\|_\cH^2$, where $\|\cdot\|_\cH$ is the norm in the RKHS $\cH$ and $\phi_{x} = k(x, \cdot)$ denotes the canonical feature of $x \in \R^d$. In other words, we generalize the matching of covariates in the Euclidean space to the matching of embedded canonical features in the RKHS $\cH$, which leads to the following kernelized formulation:
\begin{equation}
	\label{eq:kernelized_formulation}
	\begin{aligned}
		\mathrm{KSC}^{v, w}_\lambda := \min_{\pi \in \R_+^{N_c \times N_t}} \quad & \frac{1}{2} \sum_{j \in \cT} v_j \|\phi_{x_j} -  \sum_{i \in \cC} \tfrac{\pi_{i j}}{v_j} \phi_{x_i}\|_\cH^2 + \lambda \sum_{i \in \cC} \sum_{j \in \cT} \pi_{i j} \log \tfrac{\pi_{i j}}{e}, \\
		\mathrm{subject~to} \quad & \sum_{i \in \cC} \pi_{i j} = v_j \quad \forall j \in \cT \quad \text{and} \quad \sum_{j \in \cT} \pi_{i j} = w_i \quad \forall i \in \cC,
	\end{aligned}
\end{equation} 
where $v = (v_j)_{j \in \cT} \in \Delta_{N_t}^+$ and $w = (w_i)_{i \in \cC} \in \Delta_{N_c}^+$ are strictly positive probability vectors that generalize the constraints \eqref{eq:weights_sum_to_one} and \eqref{eq:weights_constraint_mean-difference} to allow treated and control units to have different weights.\footnote{Whenever $v = \frac{1}{N_t} 1_{N_t}$, we will drop the dependency on $v$ and abbreviate as $\mathrm{KSC}^{w}_\lambda$.} With the optimal coupling $\widehat{\pi}$, we impute using convex combination same as in \eqref{eqn:convex-combination-imputation}, a classic idea in nonparametric regression.

\paragraph{Propensity Score}
Using arbitrary weights through $v = (v_j)_{j \in \cT} \in \Delta_{N_t}^+$ and $w = (w_i)_{i \in \cC} \in \Delta_{N_c}^+$ allows for various average treatment effect estimators, particularly including those incorporating propensity scores. To estimate the ATT, we can set $v = \frac{1}{N_t} 1_{N_t}$ as before, and the average of the estimated individual treatment effects of the treated leads to
\begin{equation*}
 	\frac{1}{N_t} \sum_{j \in \cT} \widehat{\tau}_j = \frac{1}{N_t} \sum_{j \in \cT} (Y_j - \sum_{i \in \cC} \tfrac{\widehat{\pi}_{i j} }{1 / N_t} Y_i) =  \frac{1}{N_t} \sum_{j \in \cT} Y_j - \sum_{i \in \cC} w_i Y_i.
\end{equation*}
Now, let $\widehat{p} \colon \R^d \to (0, 1)$ be a suitable estimator---often estimated by logistic regression---of the propensity score, $x \mapsto \P(W = 1 \, | \, X = x)$. 
If the weights $w$ are chosen based on the propensity scores as
\begin{equation}
	\label{eq:propensity_score_weights}
	w_i = \tfrac{\widehat{p}(x_i)}{1 - \widehat{p}(x_i)} / \sum_{i' \in \cC} \tfrac{\widehat{p}(x_{i'})}{1 - \widehat{p}(x_{i'})} \quad \forall i \in \cC,
\end{equation}
the aggregate effects match the normalized Inverse Probability Weighting (IPW) estimator \cite{hirano2001estimation}, that is,
\begin{equation}
	\label{eq:normalized_ipw}
	\frac{1}{N_t} \sum_{j \in \cT} \widehat{\tau}_j  = \frac{\sum_{j \in \cT} Y_j}{N_t}  - \frac{\sum_{i \in \cC}  \tfrac{\widehat{p}(x_i)}{1 - \widehat{p}(x_i)} Y_i }{\sum_{i \in \cC} \tfrac{\widehat{p}(x_i)}{1 - \widehat{p}(x_i)}} =: \widehat{\tau}^{\mathsf{IPW}}_{\mathsf{ATT}}.
\end{equation}
Namely, the proposed KSC method is guaranteed to yield individual treatment effect estimates that conform to \eqref{eq:normalized_ipw} at the aggregate level. 

Similarly, solving $\mathrm{KSC}^{v,w}_\lambda$ in \eqref{eq:kernelized_formulation} by setting
\begin{align*}
	v_j = \tfrac{1}{\widehat{p}(x_j)} / \sum_{j' \in \cT} \tfrac{1}{\widehat{p}(x_{j'})} \quad \forall j \in \cT \quad \text{and} \quad w_i = \tfrac{1}{1 - \widehat{p}(x_i)} / \sum_{i' \in \cC} \tfrac{1}{1 - \widehat{p}(x_{i'})} \quad \forall i \in \cC,
\end{align*}
we obtain an coupling $\widehat{\pi}$ that matches the IPW estimator for ATE
\begin{align*}
	 \sum_{j \in \cT} v_j \widehat{\tau}_j = \frac{\sum_{j \in \cT} \tfrac{1}{\widehat{p}(x_j)} Y_j}{\sum_{j \in \cT} \tfrac{1}{\widehat{p}(x_j)}} - \frac{\sum_{i \in \cC}  \tfrac{1}{1 - \widehat{p}(x_i)}  Y_i}{\sum_{i \in \cC}  \tfrac{1}{1 - \widehat{p}(x_i)} } =:  \widehat{\tau}^{\mathsf{IPW}}_{\mathsf{ATE}}.
\end{align*}

\subsection{Numeric Illustration}
\label{sec:numerical_illustration}

Now, we demonstrate our imputation method, KSC, using a real-world data set. This section aims to concisely present the features of KSC and compare it with other imputation methods; further details and robust evaluations are provided in Section \ref{sec-application} together with the inference procedure, to be presented in the next section.

We apply the proposed method to evaluate the National Supported Work (NSW) demonstration program, first analyzed by \cite{lalonde1986evaluating}. This data set has been widely studied in the program evaluation literature. The NSW program was conducted from 1975 to 1978 in the United States, which aimed at providing a job training program to disadvantaged workers, where the treatment, namely, the training, was randomly assigned to some of them. The data we consider here is a specific subset of the experimental data used in \cite{dehejia1999causal}, which consists of $N = 445$ subjects with $N_t = 185$ treated units and $N_c = 260$ control units. The outcome of interest $Y_i$ is the post-treatment earnings recorded in 1978. Each individual is associated with six variables: age, years of education, and four indicator variables denoting whether the individual is black, Hispanic, married, and a high school dropout. Following \cite{dehejia1999causal}, the pretreatment outcomes, the earnings in 1974 and 1975, and two indicator variables denoting whether these pretreatment earnings are zero, are included as covariates. Accordingly, we have $x_i \in \R^d$ with $d = 10$. 

\begin{figure}
	\centering
	\subfloat[NSW $\lambda = 0.001$]{\includegraphics*[width=0.47\textwidth]{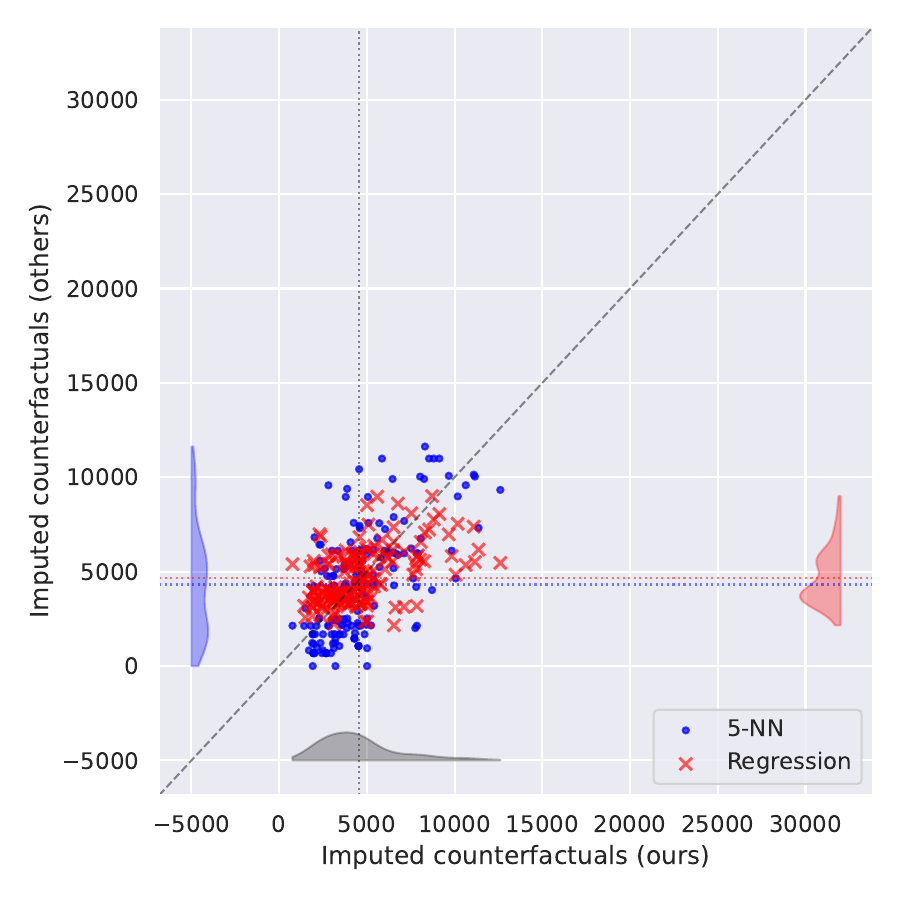}}
	\subfloat[NSW $\lambda = 0.01$]{\includegraphics*[width=0.47\textwidth]{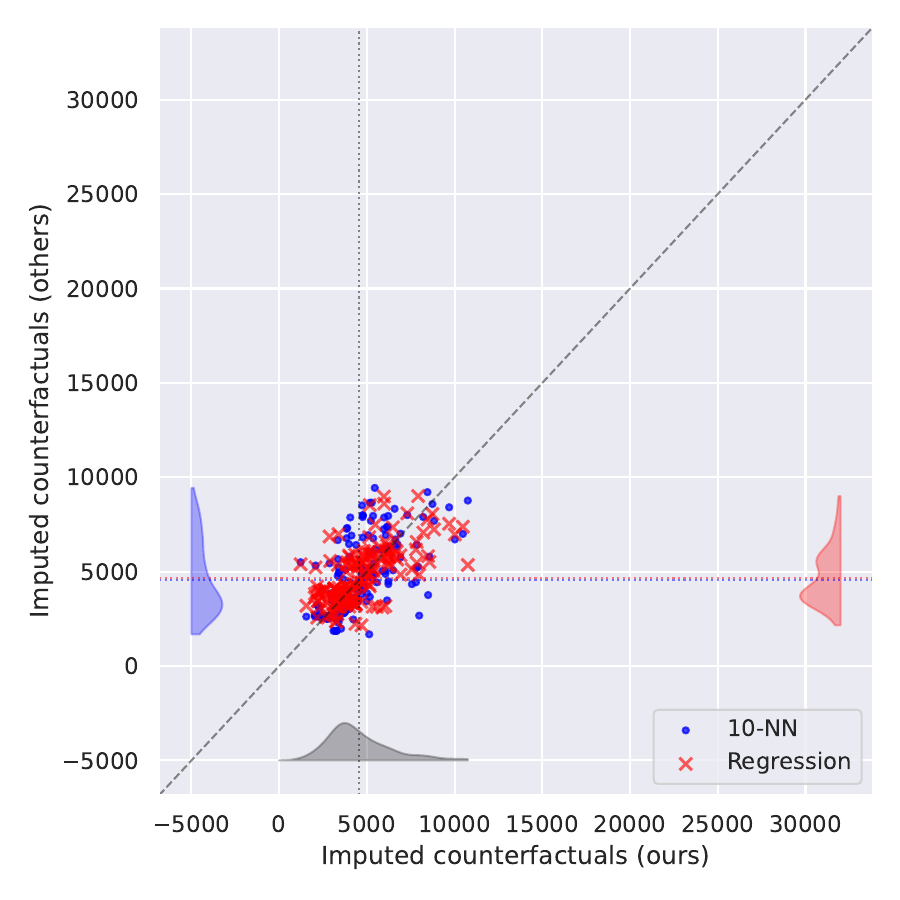}}
	\qquad
	\subfloat[PSID $\lambda = 0.001$]{\includegraphics*[width=0.47\textwidth]{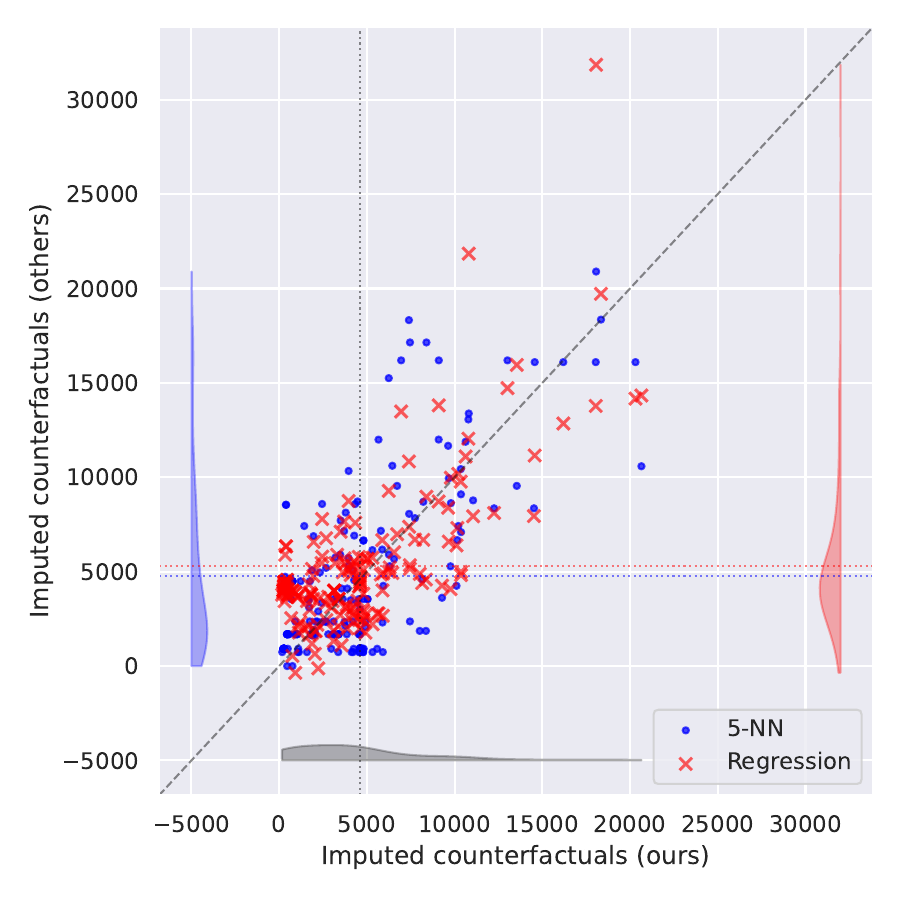}}
	\subfloat[PSID $\lambda = 0.01$]{\includegraphics*[width=0.47\textwidth]{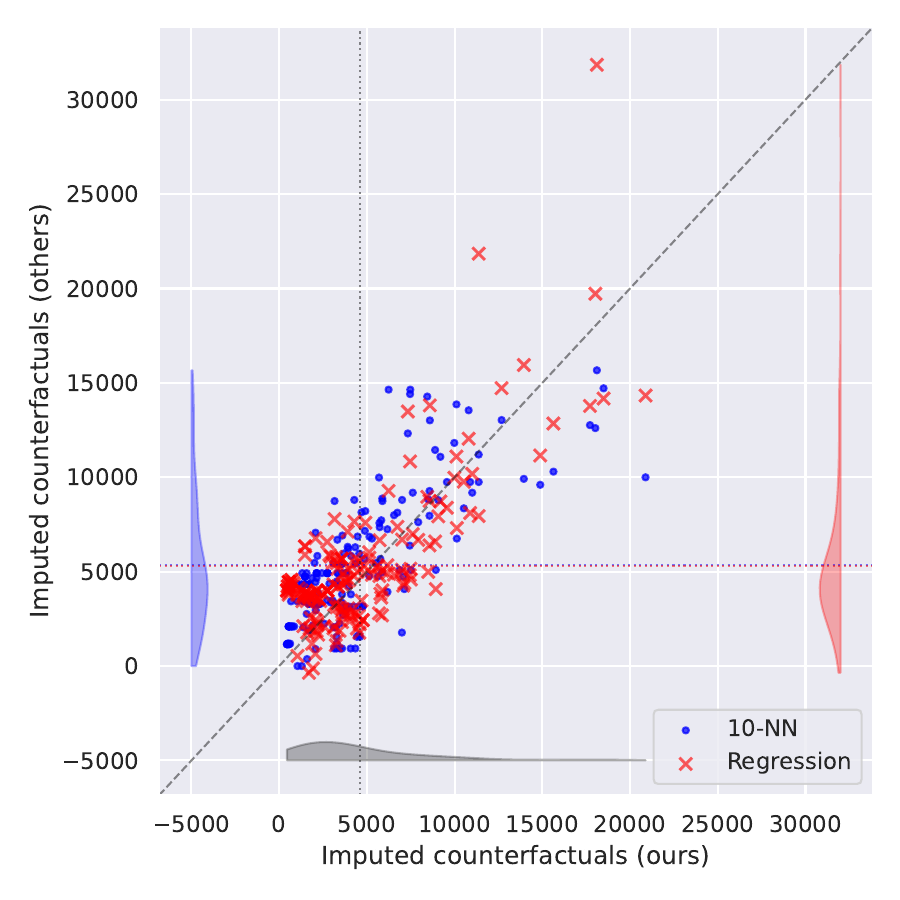}}
	\caption{\small Scatter plots of the imputed counterfactual outcomes of the treated along with the histograms of the marginal distributions. The $x$-coordinate is $\widehat{Y}_j(0)$ imputed by the proposed method with $\lambda \in \{0.001, 0.01\}$, while the $y$-coordinate is the imputed counterfactual outcomes by the nearest neighbor matching with replacement or the regression method. (a) and (b) are based on the NSW experimental data, while (c) and (d) are based on the PSID data after trimming. The proposed method uses the linear kernel for both data. For the NSW data, we use uniform weights $v, w$ in \eqref{eq:kernelized_formulation}, while for the PSID data, we set $w$ to be the propensity score-based weights following \eqref{eq:propensity_score_weights}.}
	\label{fig:NSW_imputation_scatterplots}
\end{figure}

Figure \ref{fig:NSW_imputation_scatterplots}(a) and Figure \ref{fig:NSW_imputation_scatterplots}(b) show the scatter plots of the imputed counterfactual outcomes of the treated units, where the $x$-coordinate is $\widehat{Y}_j(0)$ imputed by the proposed method, while the $y$-coordinate is the imputed counterfactual outcomes by the nearest neighbor matching with replacement or the regression imputation method. The proposed method relies on the linear kernel following the simple formulation in Section \ref{sec:main_formulation} with $\lambda \in \{0.001, 0.01\}$, the nearest neighbor matching uses $5$ and $10$ nearest neighbors, and the regression method fits a linear regression model to the observed outcome using all the covariates and the treatment indicator. At first glance, the scatter plots visually suggest that all three imputation methods provide comparable imputed counterfactual outcomes. One crucial difference is that the average of the imputed counterfactual outcomes and, thus, the estimated individual treatment effects vary across the methods. The ATT estimate computed by the proposed method always matches---regardless of the choice of $\lambda$---the mean difference between the treatment and control groups, which is roughly 1794.3, as shown in \eqref{eq:mean_difference}---the unbiased estimator of the ATT in the experimental setup. In contrast, the $k$-nearest neighbor matching with $k = 5$, with $k = 10$, and the regression imputation method produce the ATT estimates 2030.5, 1776.6, and 1706.2, respectively. While the average of the imputed counterfactual outcomes does not depend on the choice of $\lambda$, we can see that it determines the level of individualization. The histograms along the $x$-axix show that the imputed counterfactual outcomes based on the proposed method with $\lambda = 0.001$ are more dispersed than those based on $\lambda = 0.01$; the latter is based on larger regularization leading to more uniform weights, which is similar to the fact that $k = 5$ is more dispersed---along the $y$-axis---than $k = 10$ in the nearest neighbor matching.

Since \cite{lalonde1986evaluating}, there has been a debate on the credibility of the average treatment effect estimation using observational control groups. Since such control groups can be significantly different from the control group of the experimental data, the resulting ATE estimates can deviate much from the experimental benchmark, say, the mean difference of 1794.3 computed earlier, which has been the central argument of \cite{lalonde1986evaluating}. Later, several methods \cite{dehejia1999causal,dehejia2002propensity} have been proposed to recover the experimental benchmark estimate using observational control groups. Here, we focus on one such method based on the normalized inverse probability weighting (IPW) estimator \eqref{eq:normalized_ipw} and apply the KSC method with the weights based on the propensity scores. To this end, we consider a control group from the Panel Study of Income Dynamics (PSID) data, consisting of 2490 units, with which we estimate the propensity scores by logistic regression. The resulting ATT estimate by $\widehat{\tau}^{\mathsf{IPW}}_{\mathsf{ATT}}$ is 2579.7, which is far from the benchmark of 1794.3, rooted in the stark difference between the experimental and observational control groups. To remedy this, we trim the PSID data by only taking the units whose propensity scores are in $[0.05, 0.95]$, yielding 214 units. The resulting ATT estimate by $\widehat{\tau}^{\mathsf{IPW}}_{\mathsf{ATT}}$ is then 1748.0, which resembles the experimental benchmark. Taking this trimmed PSID data as a control group, we apply the proposed method with the weights $w_i$ based on the propensity scores as in \eqref{eq:propensity_score_weights} and compare with the matching and regression methods. The results are shown in Figure \ref{fig:NSW_imputation_scatterplots}(c) and Figure \ref{fig:NSW_imputation_scatterplots}(d). Again, though visually similar in distributions, the ATT estimates differ across the methods, where the proposed method with the propensity score-based weights provides the ATT estimate 1748.0, regardless of $\lambda$. In contrast, the $k$-nearest neighbor matching with $k = 5$, with $k = 10$, and the regression method produce the ATT estimates 1567.3, 1003.0, and 1065.5, respectively. This example illustrates that the proposed imputation method allows for producing granular information, namely, the individual treatment effects, from a coarse level estimate by $\widehat{\tau}^{\mathsf{IPW}}_{\mathsf{ATT}}$. The coarse level estimate is closer to the experimental benchmark estimate than inherently imputation-based methods such as the nearest neighbor matching or regression.

\section{Inference: Individual Confidence Intervals}
\label{sec-inference}
Constructing confidence intervals for the estimated individual treatment effects is conducive to reliable decision-making. This section introduces and analyzes the method to construct individual confidence intervals based on the proposed convexified matching. To this end, we build a confidence interval for the missing counterfactual outcome $Y_j(0)$ for each treated unit $j \in \cT$. We first introduce appropriate model assumptions. Then, we elucidate the construction of individual confidence intervals, followed by a numerical example, and discuss the coverage and efficiency of the proposed method.

\subsection{Setup}
\paragraph{The Model}
Let $\cH$ be the reproducing kernel Hilbert space (RKHS) associated with a kernel function $k \colon \R^d \times \R^d \to \R$, where $\langle \cdot, \cdot \rangle_\cH$ denotes the RKHS inner product, $\|\cdot\|_\cH$ denotes the RKHS norm, and $\phi_{x}:= k(x, \cdot) \in \cH$ is the corresponding reproducing kernel function. We postulate the following model for the potential outcomes: for each $i = 1, \ldots, N$, letting $x_i \in \R^d$ be the covariate of unit $i$, we have
\begin{align*}
	Y_i(0) & = f_0(x_i) + \epsilon_{0, i}, ~\epsilon_{0, i} \sim \mathcal{N}(0, \sigma_0^2), \\
	Y_i(1) & = f_1(x_i) + \epsilon_{1, i}, ~\epsilon_{1, i} \sim \mathcal{N}(0, \sigma_1^2),
\end{align*}
where $f_0, f_1 \in \cH$ are some functions in the RKHS and $\sigma_0, \sigma_1 \ge 0$. Let $\bX:= [x_1, \ldots, x_N]$ denote the fixed design matrix and let $\bZ := [Z_1, \ldots, Z_N]$ denote the possibly random treatment assignment vector.

\begin{assumption}
	\label{assumption:inference}
	$\{\epsilon_{0, i}, \epsilon_{1, i}\}_{i = 1}^{N}$ are independent and
	\begin{equation}
		\label{eq:exogeneous_error}
		\bZ \indep \{\epsilon_{0, i}, \epsilon_{1, i}\}_{i = 1}^{N} ~ | ~ \bX,
	\end{equation}
	that is, the treatment assignment and the errors are independent given the covariates.
\end{assumption}

\begin{remark}
	Typical assumptions in causal inference involve the following: (i) $(Y_i(0), Y_i(1), X_i)_{i = 1}^{N}$ are independently drawn from some population distribution defined on $\R \times \R \times \R^d$ and (ii) the treatment assignment $Z_i$ is independent of $(Y_i(0), Y_i(1))$ conditional on $X_i$. Unlike (i), our model postulates a signal-plus-noise structure for the potential outcomes, where the covariates are from a fixed design matrix.  All our results are conditioned on the fixed design $\bX$, which we highlight by using the lowercase $\{x_i\}$ notation instead of $\{X_i\}$. On the one hand, Assumption \ref{assumption:inference} serves the same purpose as (ii) in spirit but is stronger in the sense that the errors are assumed to be Gaussian. On the other hand, we do not require $\bX$ nor $\bZ$ to be i.i.d.\ drawn, allowing $Z_i$'s to have arbitrary dependence across units or even to be adversarially chosen conditioned on the fixed design $\bX$.
\end{remark}

Under this setup, for each treated unit $j \in \cT$, the goal is to build an individual confidence interval for its missing counterfactual outcome $Y_j(0)$. More specifically, we aim to construct a confidence interval $[\widehat{L}_j, \widehat{U}_j]$ that contains the conditional expectation $f_0(x_j) = \E[Y_j(0) \, | \, x_j] = \E[Y_j(0) \, | \, \bX, \bZ]$ with a desired level of confidence as follows:
\begin{equation}
	\label{eq:confidence_interval_coverage}
	\P\left(f_0(x_j) \in [\widehat{L}_j, \widehat{U}_j] ~ | ~ \bX, \bZ\right) \ge 1 - \alpha,
\end{equation} 
namely, the probability that the interval $[\widehat{L}_j, \widehat{U}_j]$ contains the conditional expectation $f_0(x_j)$ is at least $1 - \alpha$, where the probability is conditional on the design matrix $\bX$ and the treatment assignments $\bZ$.

\subsection{Individual Confidence Intervals}
\label{sec:confidence_intervals_method}

\paragraph{Construction}
We consider the following Kernerlized Synthetic Coupling (KSC) discussed in Section \ref{sec:extensions}.
To construct an individual interval for the counterfactual control outcome of each treated unit, we solve \eqref{eq:kernelized_formulation} with uniform weights $v = \frac{1}{N_t}$ on the treated, which we rewrite in the matrix form as follows: let $(w_i)_{i \in \cC}$ be positive numbers such that $\sum_{i \in \cC} w_i = 1$,
\begin{equation}
	\label{eq:kernelized_formulation_uniform_treated}
	\begin{aligned}
		\mathrm{KSC}^{w}_{\lambda} = \min_{\pi \in \R_+^{N_c \times N_t}} \quad & \tfrac{N_t}{2} \langle \pi, K_{c c} \pi \rangle - \langle \pi, K_{c t} \rangle + \tfrac{\mathrm{tr}(K_{t t})}{2 N_t} + \lambda \langle \pi, \log \tfrac{\pi}{e} \rangle, \\
		\mathrm{subject~to} \quad & \sum_{i \in \cC} \pi_{i j} = \tfrac{1}{N_t} \quad \forall j \in \cT \quad \text{and} \quad \sum_{j \in \cT} \pi_{i j} = w_i \quad \forall i \in \cC.
	\end{aligned}
\end{equation}

Let $\widehat{\pi}$ be the solution of the above optimization problem and define a probability transition matrix $\widehat{P} = (\widehat{p}_{i j})_{(i, j) \in \cC \times \cT}$, where $\widehat{p}_{i j} = N_t \widehat{\pi}_{i j}$ so that $\sum_{i \in \cC} \widehat{p}_{i j} = 1$ for any $j \in \cT$. For each treated unit $j$, recall the imputed value $\widehat{Y}_j(0)$ defined in \eqref{eqn:convex-combination-imputation}, for which we construct a confidence interval $[\widehat{L}_j, \widehat{U}_j]$, where $\widehat{L}_j$ and $\widehat{U}_j$ are defined as follows:
\begin{align}
	\widehat{L}_j & := \widehat{Y}_j(0) - \widehat{\theta} \cdot \sqrt{\left(K_{t t} + \widehat{P}^\top K_{c c} \widehat{P} - 2 K_{c t}^\top \widehat{P} \right)_{j j}} - z_{1 - \frac{\alpha}{2}} \cdot \widehat{\sigma}_0 \sqrt{\left(\widehat{P}^\top \widehat{P}\right)_{j j}}, \label{eq:lower} \\
	\widehat{U}_j & := \widehat{Y}_j(0) + \widehat{\theta} \cdot \sqrt{\left(K_{t t} + \widehat{P}^\top K_{c c} \widehat{P} - 2 K_{c t}^\top \widehat{P} \right)_{j j}} + z_{1 - \frac{\alpha}{2}} \cdot \widehat{\sigma}_0 \sqrt{\left(\widehat{P}^\top \widehat{P}\right)_{j j}}. \label{eq:upper}
\end{align}
Here, $z_{1 - \frac{\alpha}{2}}$ is the $(1 - \frac{\alpha}{2})$-quantile of the standard normal distribution and $\widehat{\theta}, \widehat{\sigma}_0$ are suitable estimates of the norm $\|f_0\|_\cH$ and the standard deviation $\sigma_0$, respectively. 

We apply kernel ridge regression to the control group data to estimate $\|f_0\|_\cH$ and $\sigma_0$. We can estimate $f_0 \in \cH$ by $\widehat{f}_0(x) = \sum_{i \in \cC} \widehat{\beta}_i k(x_i, x)$, where $\widehat{\beta} = (K_{c c} + \rho I_{N_c})^{-1} Y_c \in \R^{N_c}$ with a kernel matrix $K_{c c}$ defined in Section \ref{sec:extensions}, a vector of control outcomes $Y_c = (Y_i)_{i \in \cC} \in \R^{N_c}$, and a ridge regularization parameter $\rho > 0$. Then, we estimate $\|f_0\|_\cH$ by
\begin{equation*}
	\widehat{\theta} := \|\widehat{f}_0\|_\cH = \sqrt{\langle \widehat{\beta}, K_{c c} \widehat{\beta} \rangle},
\end{equation*}
and the variance $\sigma_0^2$ by
\begin{equation*}
	\widehat{\sigma}_0^2 = \frac{1}{N_c} \sum_{i \in \cC} (Y_i - \widehat{f}_0(x_i))^2 = \frac{1}{N_c} \langle Y_c - K_{c c} \widehat{\beta}, Y_c - K_{c c} \widehat{\beta} \rangle.
\end{equation*}

Before diving into each component of the construction, we first look at a simple numerical example to illustrate the performance of the proposed method.

\paragraph{Numerical Example}
We consider a set of fixed one-dimensional covariates $x_1, \ldots, x_N \in [0, 1]$ with $N = 500$ and we randomly choose $N_t = 200$ treatment units. Let $\cH$ be the RKHS associated with the Gaussian kernel $k(x, x') = \exp(-\gamma \cdot |x - x'|^2)$, where $\gamma = 2.5$, and suppose $f_0(x) = k(0.5, x)$, which yields $\|f_0\|_\cH = (k(0.5, 0.5))^{1 / 2} = 1$. Then, we generate the potential outcomes by $Y_i(0) = f_0(x_i) + \epsilon_{0, i}$, where $\epsilon_{0, 1}, \ldots, \epsilon_{0, N}$ are independently drawn from $N(0, \sigma_0^2)$. As $x_1, \ldots, x_N$ and $Z_1, \ldots, Z_N$ are fixed, the randomness is solely from the residuals $\epsilon_i$'s, which we sample 1000 times. 

\begin{figure}[!ht]
    \centering
    \subfloat[$\sigma_0 = 0.1$]{\includegraphics[width=0.99\textwidth]{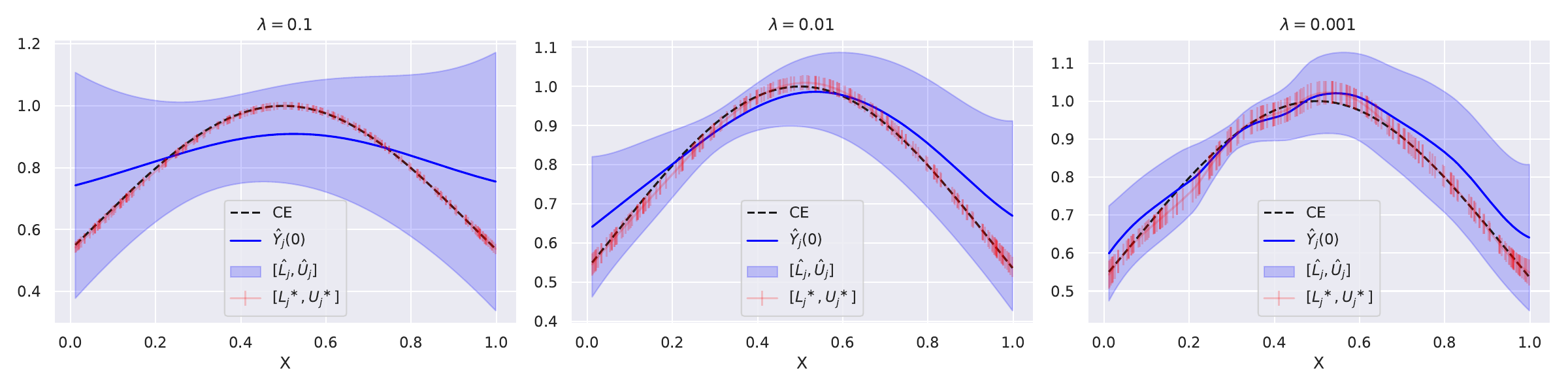}}
	\qquad
    \subfloat[$\sigma_0 = 1$]{\includegraphics[width=0.99\textwidth]{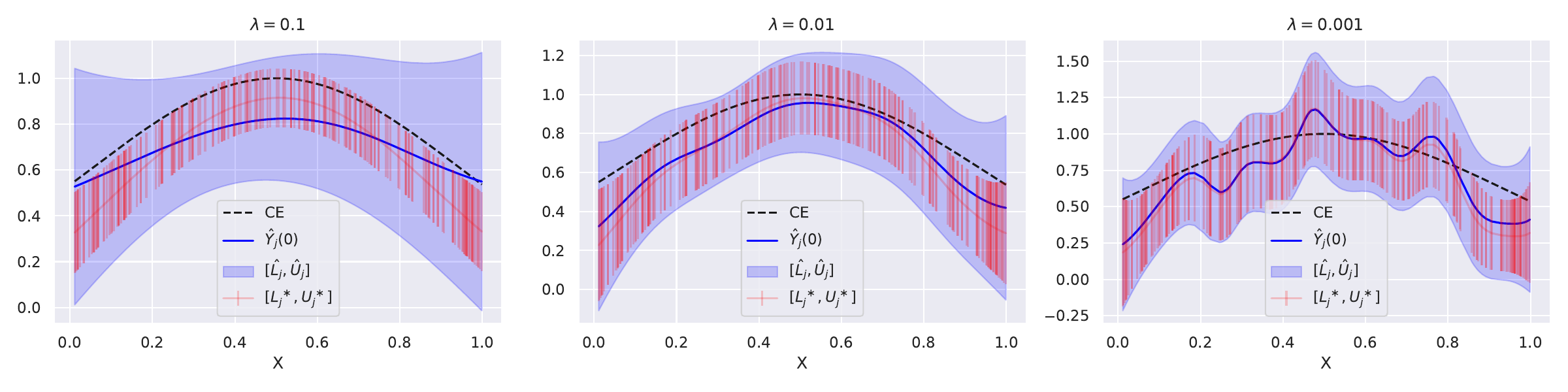}}
	\qquad
	\subfloat[$\sigma_0 = 3$]{\includegraphics[width=0.99\textwidth]{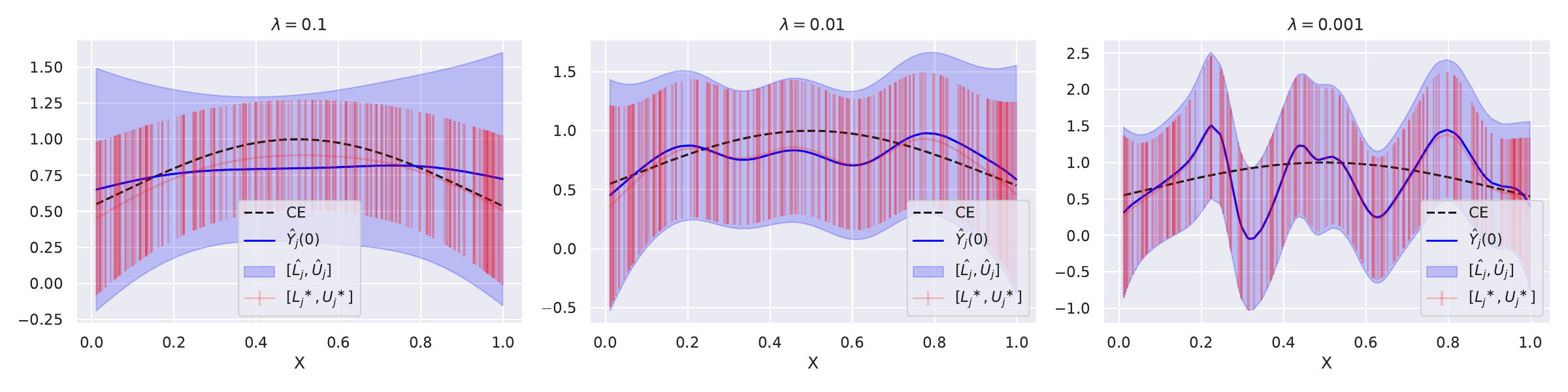}}
    \caption{\small Confidence intervals. The potential outcomes are generated by $Y_i(0) = f_0(x_i) + \epsilon_{0, i}$, where $\epsilon_{0, 1}, \ldots, \epsilon_{0, N}$ are independently drawn from $N(0, \sigma_0^2)$. The black dashed curve is the ground truth conditional expectation function $f_0(x) = \exp(-2.5 \times |x - 0.5|^2)$. The blue solid curve shows the estimated counterfactual outcome $\widehat{Y}_j(0) = \sum_{i \in \cC} \widehat{p}_{i j} Y_i$ for each treated unit $j$. The blue shaded region represents $[\widehat{L}_j, \widehat{U}_j]$ defined in \eqref{eq:lower} and \eqref{eq:upper}, where $\widehat{\theta}$ and $\widehat{\sigma}_0$ are estimated by the kernel ridge regression as explained in Section \ref{sec:confidence_intervals_method}. The oracle interval $[L_j^\ast, U_j^\ast]$ defined in \eqref{eq:lower_oracle} and \eqref{eq:upper_oracle} is shown using the red error bars.}
    \label{fig:CI_plots}
\end{figure}

Figure \ref{fig:CI_plots} shows the constructed confidence intervals $[\widehat{L}_j, \widehat{U}_j]$ for each treated unit $j$ for different values of $\sigma_0$. For each $\sigma_0 \in \{0.1, 1, 3\}$, we solve \eqref{eq:kernelized_formulation} for $\lambda \in \{0.1, 0.01, 0.001\}$ with uniform weights, run the kernel ridge regression to obtain $\widehat{\theta}, \widehat{\sigma}_0$, and produce $[\widehat{L}_j, \widehat{U}_j]$ which is shown as blue shaded regions in the plots. We compare this interval with an ``oracle'' interval $[L_j^\ast, U_j^\ast]$, which---as we will explain in the next section--- is guaranteed to have an exact $1 - \alpha$ coverage, namely,
\begin{equation}
	\label{eq:oracle_coverage}
	\P\left(f_0(x_j) \in [L_j^\ast, U_j^\ast] ~ | ~ \bX, \bZ\right) = 1 - \alpha.
\end{equation}
By construction, if the estimates $\widehat{\theta}, \widehat{\sigma}_0$ are accurate enough, the interval $[\widehat{L}_j, \widehat{U}_j]$ is supposed to contain the oracle interval $[L_j^\ast, U_j^\ast]$, which can be verified in Figure \ref{fig:CI_plots}. For each $\sigma_0$, decreasing $\lambda$ results in smaller approximation errors, namely, the bias correction reflected in $\left(K_{t t} + \widehat{P}^\top K_{c c} \widehat{P} - 2 K_{c t}^\top \widehat{P} \right)_{j j}^{1/2}$, and the estimated counterfactual outcomes (solid blue curves) that are more wiggly as the obtained weights have more variations. As we will see, $[\widehat{L}_j, \widehat{U}_j]$ is wider than $[L_j^\ast, U_j^\ast]$ by the twice of the bias correction, namely, 
\begin{equation*}
	(\widehat{U}_j - \widehat{L}_j) - (U_j^\ast - L_j^\ast) = 2 \cdot \widehat{\theta} \cdot \left(K_{t t} + \widehat{P}^\top K_{c c} \widehat{P} - 2 K_{c t}^\top \widehat{P} \right)_{j j}^{1/2}.
\end{equation*}
We can verify from Figure \ref{fig:CI_plots} that this gap between the two intervals gets smaller as $\lambda$ decreases, consistent with the above observation that the bias decreases.

Figure \ref{fig:CI_coverage} plots the coverage of $[\widehat{L}_j, \widehat{U}_j]$ and $[L_j^\ast, U_j^\ast]$, namely, the probability that they contain $f_0(x_j)$ estimated using the 1,000 samples. When $\lambda = 0.1$, we can see that the coverage of $[\widehat{L}_j, \widehat{U}_j]$ is almost $1$, meaning that the interval is too conservative, which results from the fact that the bias term is not small enough compared to the variance of the residuals. As $\lambda$ decreases, we can see that the coverage of $[\widehat{L}_j, \widehat{U}_j]$ gets closer to $1 - \alpha = 0.95$, which is consistent with the fact that the bias decreases. Particularly, when $\sigma_0 = 3$ and $\lambda = 0.001$, the coverage of $[\widehat{L}_j, \widehat{U}_j]$ is closest to the coverage of the oracle interval $[L_j^\ast, U_j^\ast]$, suggesting that the bias is dominated by the variance. Meanwhile, for $(\sigma_0, \lambda) = (1, 0.01)$ and $(\sigma_0, \lambda) = (3, 0.1)$, the coverage of $[\widehat{L}_j, \widehat{U}_j]$ is larger than the coverage of the oracle interval $[L_j^\ast, U_j^\ast]$ by approximately a constant, which suggests that the bias term and the variance term are nearly balanced. In the next section, we provide theoretical guidance to choose the regularization parameter $\lambda$ to balance the bias and variance tradeoff.

\begin{figure}[!ht]
    \centering
    \subfloat[$\sigma_0 = 0.1$]{\includegraphics[width=0.99\textwidth]{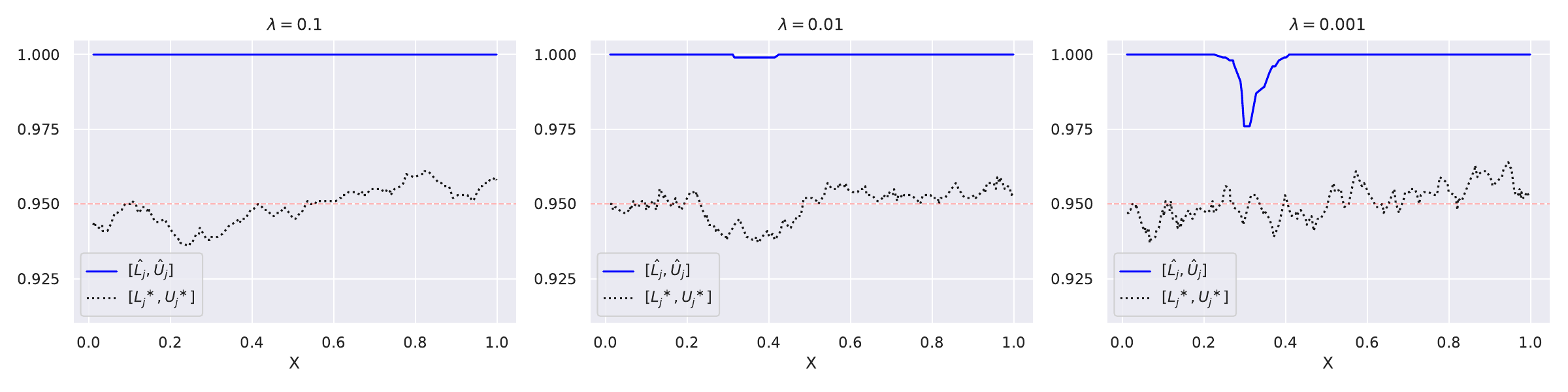}}
	\qquad
    \subfloat[$\sigma_0 = 1$]{\includegraphics[width=0.99\textwidth]{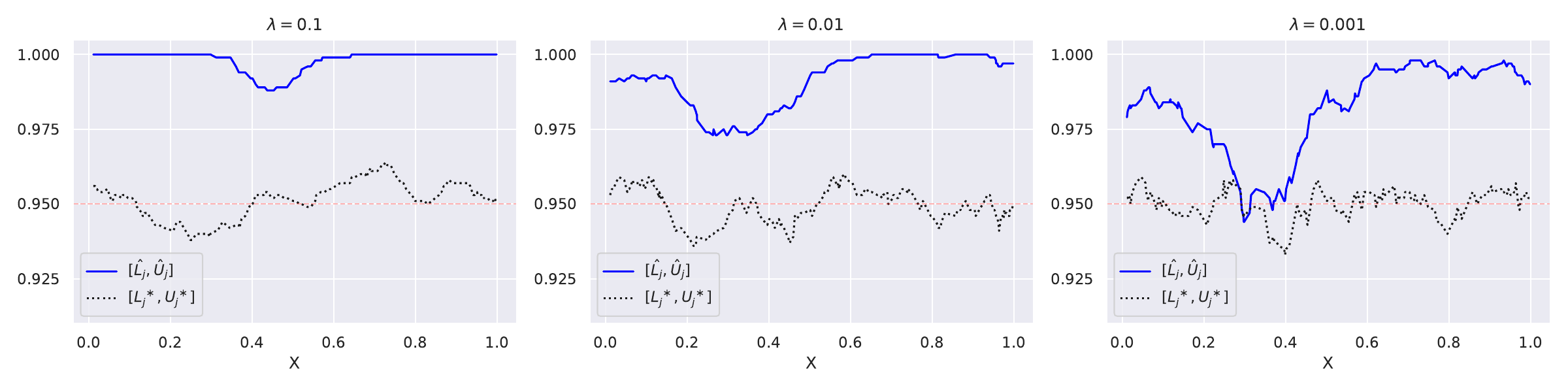}}
	\qquad
	\subfloat[$\sigma_0 = 3$]{\includegraphics[width=0.99\textwidth]{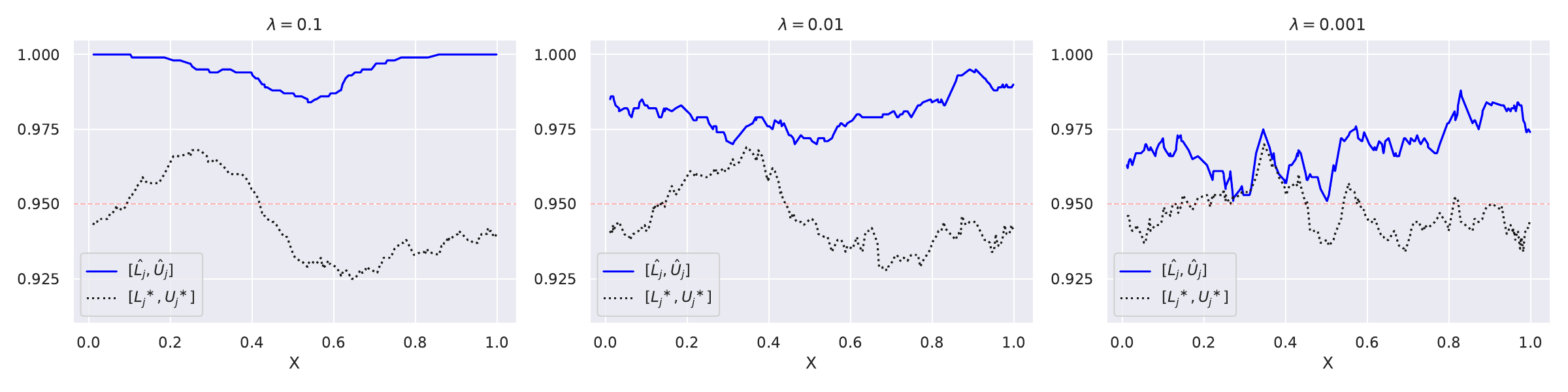}}
    \caption{\small Coverage of the confidence intervals $[\widehat{L}_j, \widehat{U}_j]$ and $[L_j^\ast, U_j^\ast]$ estimated using the 1,000 samples. The red dashed line shows the nominal coverage $1 - \alpha = 0.95$.}
    \label{fig:CI_coverage}
\end{figure}

\subsection{Coverage and Efficiency}
Now, we dive into each component in constructing individual confidence intervals to dissect the coverage and efficiency. We discuss how the entropic regularization parameter $\lambda$ drives the bias and variance tradeoff, as seen in the numerical simulations. Furthermore, we will show that optimizing the entropic-regularized convexified matching program \eqref{eq:kernelized_formulation} with the desired $\lambda$ chosen above is closely connected to optimizing the average of the squared lengths of the individual confidence intervals $\frac{1}{N_t} \sum_{j \in \cT} \mathrm{Len}_j^2$.

\paragraph{Point Estimate}
For each treated unit $j \in \cT$, the center of the confidence interval $[\widehat{L}_j, \widehat{U}_j]$ is the imputed counterfactual outcome $\widehat{Y}_j(0)$:
\begin{align}
	\widehat{Y}_j(0) = \sum_{i \in \cC} \widehat{p}_{i j} Y_i(0).
\end{align}
This weighted average estimator leverages all the information in $\bX$ to anchor an optimal coupling (or matching) between the treated and control units for synthesizing the controls. It uses convex combinations of the control units to synthesize the granular counterfactual outcomes for the treated units, while balancing the aggregate level statistics, the ATT. We remark that the optimal coupling $\widehat{\pi}$---and thus $\widehat{P} = N_t \widehat{\pi}$ ---is calculated based solely on $\bX$ and $\bZ$, not the outcomes $Y_i$'s. Recall the following bias-variance decomposition of the point estimate $\widehat{Y}_j(0)$: 
\begin{equation*}
	\widehat{Y}_j(0) - \E[Y_j(0) \, | \, \bX, \bZ] =  \underbrace{\widehat{Y}_j(0) - \E[\widehat{Y}_j(0) \, | \, \bX, \bZ]}_{:=\cV_j} + \underbrace{\E[\widehat{Y}_j(0) \, | \, \bX, \bZ] - \E[Y_j(0) \, | \, \bX, \bZ]}_{:=\cB_j}.
\end{equation*}

\paragraph{Bias-Awareness}
The point estimate could be biased as
\begin{equation*}
	\cB_j = \E[\widehat{Y}_j(0) \, | \, \bX, \bZ] - \E[Y_j(0) \, | \, \bX, \bZ] = \sum_{i \in \cC} \widehat{p}_{i j} f_0(x_i) - f_0(x_j) = \langle f_0, \sum_{i \in \cC} \widehat{p}_{i j} \phi_{x_i} - \phi_{x_j} \rangle_{\cH},
\end{equation*}
where the last equality uses the reproducing property of the RKHS. Since $f_0$ is unknown and estimating such a function uniformly on the support of the covariate vector can be difficult, we rely on a bias-awareness correction by invoking the Cauchy-Schwarz inequality:
\begin{equation}
	\label{eq:approximation_error_bound}
	|\cB_j| \le \|f_0\|_\cH \cdot \|\phi_{x_j} - \sum_{i \in \cC} \widehat{p}_{i j} \phi_{x_i}\|_\cH = \| f_0\|_{\cH} \cdot \left(K_{t t} + \widehat{P}^\top K_{c c} \widehat{P} - 2   K_{c t}^\top \widehat{P} \right)_{j j}^{1/2}.
\end{equation}
This explains subtracting and adding the term $\widehat{\theta} \cdot \left(K_{t t} + \widehat{P}^\top K_{c c} \widehat{P} - 2 K_{c t}^\top \widehat{P} \right)_{j j}^{1/2}$ in \eqref{eq:lower} and \eqref{eq:upper}, respectively, where $\widehat{\theta}$ estimates the unknown quantity $\|f_0\|_\cH$.

\paragraph{Variance}
Under the model,
\begin{equation}
	\label{eq:variance}
	\cV_j = \widehat{Y}_j(0) - \sum_{i \in \cC} \widehat{p}_{i j} f_0(x_i) = \sum_{i \in \cC} \widehat{p}_{i j} \epsilon_{0, i} \sim N(0, \sigma_0^2 \sum_{i \in \cC} \widehat{p}_{i j}^2).
\end{equation}
If we knew the bias $\cB_j$, we could construct the following oracle confidence interval $[L_j^\ast, U_j^\ast]$:
\begin{align}
	L_j^\ast & := \widehat{Y}_j(0) - \cB_j - z_{1 - \frac{\alpha}{2}} \cdot \sigma_0 \sqrt{\sum_{i \in \cC} \widehat{p}_{i j}^2}, \label{eq:lower_oracle} \\
	U_j^\ast & := \widehat{Y}_j(0) - \cB_j + z_{1 - \frac{\alpha}{2}} \cdot \sigma_0 \sqrt{\sum_{i \in \cC} \widehat{p}_{i j}^2}, \label{eq:upper_oracle}
\end{align}
Due to \eqref{eq:variance}, the oracle interval $[L_j^\ast, U_j^\ast]$ is guaranteed to have the exact $1 - \alpha$ coverage, namely, \eqref{eq:oracle_coverage} holds. However, since $\cB_j$ is unknown, we rely on the aforementioned bias-awareness correction to construct the confidence interval $[\widehat{L}_j, \widehat{U}_j]$, obtained by inflating the oracle interval using \eqref{eq:approximation_error_bound}.

\paragraph{Efficiency: Interval Length}
We now explain how the width of the confidence interval is related to our convexified matching objective. As a result, the explanation will shed light on why the entropic regularization serves as a tuning parameter to trade off bias and variance for overall interval length efficiency. First, we need the following simple facts on the Kullback-Leibler divergence, Hellinger distance, and $\chi^2$ distance.
\begin{proposition}
	\label{prop:Hellinger-KL-Chisq}
	For any probability vector $p \in \Delta_n$, 
	\begin{equation*}
		\frac{1}{2(n \|p\|_\infty + 1)} \le \frac{\sum_{i = 1}^{n} p_i \log(p_i) + \log(n)}{n \sum_{i = 1}^{n} p_i^2 - 1} \le 1.
	\end{equation*}
\end{proposition}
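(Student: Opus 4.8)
The plan is to recognize both the numerator and the denominator as familiar divergences between $p$ and the uniform distribution $u = \tfrac{1}{n} 1_n$, and then reduce the claim to two elementary pointwise estimates for a single convex function. First I would observe that the numerator equals the Kullback--Leibler divergence, $\sum_{i} p_i \log(n p_i) = D_{\mathrm{KL}}(p \,\|\, u)$, while the denominator equals the chi-squared divergence, $n \sum_i p_i^2 - 1 = \sum_i n\,(p_i - \tfrac{1}{n})^2 = \chi^2(p \,\|\, u)$. Writing $r_i := n p_i$ so that $\sum_i r_i = n$ and $R := \max_i r_i = n \|p\|_\infty \ge 1$, and setting $\phi(r) := r \log r - r + 1$, the identity $\sum_i (r_i - 1) = 0$ lets me rewrite the numerator as $\tfrac{1}{n}\sum_i \phi(r_i)$ and the denominator as $\tfrac{1}{n}\sum_i (r_i - 1)^2$. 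The proposition then follows from the pair of termwise bounds $\tfrac{1}{2(R+1)}(r-1)^2 \le \phi(r) \le (r-1)^2$, valid for every $r = r_i > 0$, after summing over $i$.

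For the upper bound, I would invoke $\log r \le r - 1$, which gives $r \log r \le r^2 - r$ and hence $\phi(r) \le (r-1)^2$; summing over $i$ yields $D_{\mathrm{KL}}(p\,\|\,u) \le \chi^2(p\,\|\,u)$, i.e.\ the ratio is at most $1$. For the lower bound, the key is the second-order Taylor expansion of $\phi$ at $r = 1$: since $\phi(1) = 0$, $\phi'(r) = \log r$, $\phi'(1) = 0$, and $\phi''(r) = 1/r$, Lagrange's remainder gives $\phi(r) = \tfrac{(r-1)^2}{2\xi}$ for some $\xi$ strictly between $1$ and $r$. I would then bound $\xi \le \max(1, r) \le \max(1, R) = R \le R + 1$, which yields $\phi(r) \ge \tfrac{(r-1)^2}{2(R+1)}$; summing over $i$ and clearing the denominator gives exactly the claimed lower bound $\tfrac{1}{2(n\|p\|_\infty + 1)}$.

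The main obstacle, though modest, is the lower-bound step. I must handle both regimes $r \ge 1$ and $r < 1$ when controlling the intermediate point $\xi$ (for $r < 1$ the bound is immediate since $\xi \le 1$), and I must use $R = \max_i r_i \ge 1$---which holds because the maximum of the $r_i$ dominates their average $\tfrac{1}{n}\sum_i r_i = 1$---to replace the unit-dependent factor $\max(1, r_i)$ by the single uniform constant $R + 1$ appearing in the statement. I would also note the degenerate case $p = u$, where both numerator and denominator vanish; the proposition is then understood as the two-sided inequality between numerator and denominator (after multiplying through by the nonnegative denominator), so no division by zero is involved.
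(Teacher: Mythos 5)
Your proof is correct, and your upper bound coincides with the paper's (both amount to $D_{\mathrm{KL}} \le \chi^2$ against the uniform distribution, though you prove it from scratch via $\log r \le r-1$ where the paper cites Tsybakov's Lemma 2.7), but your lower bound takes a genuinely different route. The paper passes through a third divergence: it invokes the standard comparison bounding KL below by the squared Hellinger distance (Tsybakov, Lemma 2.4) and then lower-bounds the Hellinger term pointwise, writing $\sum_{i}(\sqrt{p_i}-\sqrt{1/n})^2 = \sum_i (p_i - 1/n)^2/(\sqrt{p_i}+\sqrt{1/n})^2$ and using $(\sqrt{p_i}+\sqrt{1/n})^2 \le 2(\|p\|_\infty + 1/n)$, which is where the constant $2(n\|p\|_\infty+1)$ arises. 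You bypass Hellinger entirely and compare KL to $\chi^2$ directly through the single convex function $\phi(r) = r\log r - r + 1$, with the termwise sandwich $\frac{(r-1)^2}{2(R+1)} \le \phi(r) \le (r-1)^2$ obtained from a Lagrange-remainder Taylor expansion at $r = 1$. This buys two things: the argument is fully self-contained (no external lemmas), and it is in fact slightly sharper --- since your intermediate point satisfies $\xi \le \max(1, r_i) \le R = n\|p\|_\infty$ and $R \ge 1$, your method actually yields the constant $\frac{1}{2 n \|p\|_\infty}$, which you then relax to $\frac{1}{2(n\|p\|_\infty + 1)}$ only to match the statement, whereas the paper's Hellinger route produces the $+1$ inherently. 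One small point to patch: you assert the termwise bounds only for $r_i > 0$, while $\Delta_n$ permits $p_i = 0$; with the convention $0 \log 0 = 0$ one has $\phi(0) = 1$, and both bounds hold there by direct check ($1 \le 1$, and $1 \ge \frac{1}{2(R+1)}$ since $R \ge 1$) or by continuity, so this is cosmetic rather than a gap. Your reading of the degenerate case $p = \frac{1}{n}1_n$ as a two-sided inequality between numerator and denominator is the same interpretation the paper's proof relies on.
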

The above fact will help us relate the entropy term and the standard Euclidean norm of the weights. For a fixed $j \in \cT$, let us consider that $(\widehat{p}_{i j})_{i \in \cC} \in \Delta_{N_c}$, the $j$-th column of $\widehat{P}$, is delocalized in the sense $\max_{i \in \cC} \widehat{p}_{i j} \le (M/2-1)/ N_c$ with some constant $M > 4$. The interval length is as follows:
\begin{align*}
	\mathrm{Len}_j := 2 \cdot \| f_0 \|_{\cH} \cdot \|\phi_{x_j} - \sum_{i \in \cC} \widehat{p}_{i j} \phi_{x_i}\|_\cH + 2 \cdot z_{1 - \frac{\alpha}{2}} \cdot \sigma_0 \sqrt{\sum_{i \in \cC} \widehat{p}_{i j}^2}.
\end{align*}
From $\widehat{\pi}_{i j} = \widehat{p}_{i j} / N_t$ and Proposition \ref{prop:Hellinger-KL-Chisq}, we deduce the following bounds:
\begin{align*}
	\mathrm{Len}_j^2 & \le 8 \left\{\|f_0\|_{\cH}^2  \|\phi_{x_j} - \sum_{i \in \cC} \tfrac{\pi_{i j}}{1 /N _t} \phi_{x_i}\|_\cH^2 + M \cdot z^2_{1 - \frac{\alpha}{2}} \sigma_0^2 \frac{ N_t}{N_c} \left[\sum_{i \in \cC} \widehat{\pi}_{i j} \log \tfrac{\widehat{\pi}_{i j}}{e} + \frac{\tfrac{1}{M} + 1 + \log(N_c N_t)}{N_t}\right]\right\} ,\\
	\mathrm{Len}_j^2 & \ge 4 \left\{\|f_0\|_{\cH}^2 \|\phi_{x_j} - \sum_{i \in \cC} \tfrac{\pi_{i j}}{1 / N_t} \phi_{x_i}\|_\cH^2 + z^2_{1 - \frac{\alpha}{2}} \sigma_0^2 \frac{N_t}{N_c} \left[\sum_{i \in \cC} \widehat{\pi}_{i j} \log \tfrac{\widehat{\pi}_{i j}}{e} + \frac{2 + \log(N_c N_t)}{N_t}\right]\right\} .
\end{align*}
Averaging over $j \in \cT$, and barring the constant $M$, the upper and lower bounds on the right-hand sides remind us of the entropic regularized convexified matching objective in \eqref{eq:kernelized_formulation}, with the regularization parameter 
\begin{equation*}
	\lambda \asymp \frac{z^2_{1 - \frac{\alpha}{2}}}{N_c} \frac{\sigma_0^2}{\|f_0\|_{\cH}^2}.
\end{equation*}
The above tradeoff confirms the empirical findings: for a high signal-to-noise problem, to balance bias and variance, we need a smaller $\lambda$ for better approximation error; for a low signal-to-noise problem, a larger $\lambda$ is preferred to reduce variance. Such a phenomenon confirms the observations in Figure~\ref{fig:CI_plots}. Searching for a coupling minimizing the objective of convexified matching is closely related to minimizing the interval length, provided the regularization parameter $\lambda$ is appropriately chosen. Therefore, optimizing the entropic regularized convexified matching program \eqref{eq:kernelized_formulation} with the desired $\lambda$ chosen above is closely connected to optimizing the average of the squared lengths of the individual confidence intervals $\frac{1}{N_t} \sum_{j \in \cT} \mathrm{Len}_j^2$.

\section{Optimization: Algorithms and Analysis}
\label{sec-optimization}
This section introduces and analyzes optimization algorithms to solve the convexified matching problem, KSC defined in \eqref{eq:kernelized_formulation} with uniform weights $v = \frac{1}{N_t}$ on the treated, as shown in \eqref{eq:kernelized_formulation_uniform_treated}. Notice that it is an instance of the constrained nonlinear optimization problem: for $n, m \in \N$, $a \in \Delta_n^+$, $b \in \Delta_m^+$, $\lambda > 0$, and $g \colon \R^{n \times m} \to \R$,
\begin{equation}
	\label{eq:optim}
	\min_{\pi \in \Pi_{a, b}} ~ g(\pi) + \lambda h(\pi),
\end{equation}
where $\Pi_{a, b} = \{\pi \in \R_+^{n \times m} : \pi 1_m = a ~ \text{and} ~ \pi^\top 1_n = b\}$ and $h \colon \R_+^{n \times m} \to \R$ is the entropy function defined by $h(\pi) = \sum_{i = 1}^{n} \sum_{j = 1}^{m} \pi_{i j} \log \frac{\pi_{i j}}{e}$. The convexified matching amounts to the case where $g$ is a convex quadratic function, in which \eqref{eq:optim} is a convex program taking the form \eqref{eq:kernelized_formulation_uniform_treated}. It can be solved using the interior point method in $\mathrm{poly}(nm)$ time, possibly slow in practice for coupling matrices with $nm$ large. 

In this section, we devise a faster algorithm that requires an almost \textit{dimension-free} number of oracle calls to a subroutine---a simple iterative matrix scaling procedure called the Sinkhorn algorithm \cite{sinkhorn1967diagonal}. When $\lambda$ is sufficiently large, the algorithm compares favorably to the interior point method, requiring only a $\log(1/\epsilon)$ number of Sinkhorn subroutines. Later, we modify and extend the algorithm to cover all $\lambda \in \R_+$ by connecting to the steepest descent method under the Kullback-Leibler divergence. The convergence is admittedly slower for small $\lambda$: $\log(n m)/\epsilon$ number of Sinkhorn subroutines is required, but it still compares favorably to the interior point method.

\paragraph{Additional notation} For a matrix $A \in \R^{n \times m}$, let $\|A\|_{\mathrm{F}}$ denote its Frobenius norm, $\|A\|_1 := \sum_{i = 1}^{n} \sum_{j = 1}^{m} |A_{i j}|$ denote its $L^1$ norm, and $\|A\|_\infty := \max_{1 \le i \le n} \max_{1 \le j \le m} |A_{i j}|$ denote its $L^\infty$ norm. Let $h$ keep its definition as the entropy function, and let $\nabla h(A)$ denote the gradient of $h$, which is $\log(A)$, the entrywise logarithm of $A$, provided all entries of $A$ are positive. For any $a \in \Delta_n$ and $b \in \Delta_m$, let $\Pi_{a, b}^{+} := \{A \in \Pi_{a, b} : A_{i j} > 0 ~~ \forall i, j\}$. Let $\Delta_{n, m} = \{A \in \R_+^{n \times m} : \sum_{i = 1}^{n} \sum_{j = 1}^{m} A_{i j} = 1\}$. For vectors $u, v \in \R^n$, let $\frac{u}{v}$ denote the entrywise division provided all the entries of $v$ are nonzero. 

\subsection{Optimality Conditions and the Sinkhorn Algorithm}
When $g$ is convex and $\lambda > 0$, the strong convexity of $h$ on $\Pi_{a, b}$ implies that \eqref{eq:optim} admits a unique minimizer. Moreover, as $h$ prevents the minimizer from having a zero entry, \eqref{eq:optim} admits a unique minimizer on $\Pi_{a, b}^{+}$, which is the interior of $\Pi_{a, b}$. It turns out that this unique minimizer of \eqref{eq:optim} is the fixed point of an operator related to the entropic regularized optimal transport problem \cite{cuturi2013sinkhorn}, as we shall show in Proposition~\ref{prop:minimizer=fixed_point}.  Later, in Theorem~\ref{thm:fixed-point_L1}, we derive convergence to the minimizer via the iterative matrix scaling subroutine. This subroutine is referred to as the Sinkhorn algorithm \cite{sinkhorn1967diagonal}, which we introduce below.

\begin{definition}[Sinkhorn \cite{sinkhorn1967diagonal,cuturi2013sinkhorn}]
	\label{prop:EOT_Sinkhorn}
	Fix $\lambda > 0$. The following operator $\Phi_\lambda \colon \R^{n \times m} \to \Pi_{a, b}^{+}$ is well-defined:
	\begin{equation*}
		\Phi_\lambda(C) := \argmin_{\pi \in \Pi_{a, b}} \left(\langle C, \pi \rangle + \lambda h(\pi)\right) \quad \forall C \in \R^{n \times m}.
	\end{equation*}
	The right-hand side is often called the entropic regularized optimal transport problem, given a cost matrix $C$. For any $C \in \R^{n \times m}$, suppose there are $\mu \in \R^n$ and $\nu \in \R^m$ such that 
	\begin{equation*}
		\exp\left(-\frac{\mu 1_m^\top + 1_n \nu^\top + C}{\lambda}\right) \in \Pi_{a, b},
	\end{equation*}
	where $\exp$ is applied entrywise. Then, the following must hold:
	\begin{equation*}
		\exp\left(-\frac{\mu 1_m^\top + 1_n \nu^\top + C}{\lambda}\right) = \Phi_\lambda(C).
	\end{equation*}
	Moreover, for any $C \in \R^{n \times m}$, one can obtain $\Phi_\lambda(C)$ by iteratively scaling the rows and columns of the matrix $\exp(-C / \lambda)$ using the Sinkhorn algorithm summarized in Algorithm \ref{alg:Sinkhorn}, namely, $\Phi_\lambda(C)$ is the limit of the sequence produced from $\mathrm{Sinkhorn}(e^{-C / \lambda}, a, b)$. 
\end{definition}

\begin{algorithm}[ht]
    \caption{$\mathrm{Sinkhorn}(P, a, b)$}
    \label{alg:Sinkhorn}
    \begin{algorithmic}[1]
    \REQUIRE A matrix $P \in \R^{n \times m}$ with positive entries, $a \in \Delta_n^+$, and $b \in \Delta_m^+$.
    \STATE Initialize $k \leftarrow 0$, $x^{(k)} \leftarrow 1_n$, and $y^{(k)} \leftarrow \frac{b}{P^\top x^{(k)}}$.
	\REPEAT
		\STATE 
		\begin{equation*}
			x^{(k + 1)} \leftarrow \frac{a}{P y^{(k)}} \quad \text{and} \quad y^{(k + 1)} \leftarrow \frac{b}{P^\top x^{(k + 1)}}.
		\end{equation*}
		\STATE $P^{(k + 1)} = \mathrm{diag}(x^{(k + 1)}) P \mathrm{diag}(y^{(k + 1)})$.
		\STATE $k \leftarrow k + 1$ 
	\UNTIL{Discrepancy between $P^{(k)} 1_m$ and $a$ reaches a desired level.}
    \RETURN $P^{(k)}$.
\end{algorithmic}
\end{algorithm}

For simplicity, Algorithm \ref{alg:Sinkhorn} states the Sinkhorn algorithm such that the column sum of $P^{(k)}$ matches with $b$, namely, $(P^{(k)})^\top 1_n = b$ for any $k \ge 0$. Then, to assess the accuracy of the scaled matrix $P^{(k)}$, we may only need to check a suitable discrepancy between $a$ and the row sum of $P^{(k)}$, that is, $P^{(k)} 1_m$.

The following proposition shows that when $g$ is convex, the unique minimizer of \eqref{eq:optim} is the fixed point of an operator given by the composition of $\Phi_\lambda$ and $\nabla g$.

\begin{proposition}
	\label{prop:minimizer=fixed_point}
	Fix $\lambda > 0$. For a function $g \colon \R^{n \times m} \to \R$ such that $\nabla g$ exists, define an operator $T_\lambda \colon \R^{n \times m} \to \Pi_{a, b}^{+}$ defined by $T_\lambda = \Phi_\lambda \circ \nabla g$, that is,
	\begin{equation}
		\label{eq:operator_T}
		T_\lambda(A) = \argmin_{\pi \in \Pi_{a, b}} (\langle \nabla g(A), \pi \rangle + \lambda h(\pi)) \quad \forall A \in \R^{n \times m}.
	\end{equation}
	If $g$ is convex, \eqref{eq:optim} admits a unique minimizer contained in $\Pi_{a, b}^{+}$, say, $\pi^\star \in \Pi_{a, b}^{+}$, and $\pi^\star$ is the unique fixed point of $T_\lambda$, namely, $\pi^\star = T_\lambda(\pi^\star)$. 
\end{proposition}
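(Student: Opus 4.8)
The plan is to characterize the minimizer of \eqref{eq:optim} through its first-order optimality condition at an interior point, and then recognize that condition as exactly the fixed-point equation $\pi = \Phi_\lambda(\nabla g(\pi))$ via the Gibbs/exponential form supplied by the Sinkhorn definition. The argument runs the same equivalence in both directions: minimizer $\Rightarrow$ stationarity $\Rightarrow$ fixed point, and fixed point $\Rightarrow$ stationarity $\Rightarrow$ (by convexity) minimizer.

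First I would record existence, uniqueness, and interiority of the minimizer, which largely follow from facts noted just before the proposition. Since $h$ is strictly (indeed strongly) convex on $\Pi_{a, b}$ and $g$ is convex, $g + \lambda h$ is strictly convex; as $\Pi_{a, b}$ is a nonempty compact convex polytope, a minimizer $\pi^\star$ exists and is unique. Because $\nabla h = \log(\cdot)$ diverges as any coordinate tends to $0$, no boundary point can be stationary, forcing $\pi^\star \in \Pi_{a, b}^{+}$. Next I would write the first-order condition at the interior point $\pi^\star$. The feasible set is cut out by the linear equalities $\pi 1_m = a$ and $\pi^\top 1_n = b$ together with positivity; since $\pi^\star$ is interior the positivity constraints are inactive, and the relevant tangent space is $V = \{D : D 1_m = 0,\ D^\top 1_n = 0\}$, whose Frobenius-orthogonal complement is precisely $V^\perp = \{\mu 1_m^\top + 1_n \nu^\top : \mu \in \R^n, \nu \in \R^m\}$. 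Stationarity of $g + \lambda h$ then reads $\nabla g(\pi^\star) + \lambda \log \pi^\star = \mu 1_m^\top + 1_n \nu^\top$ for some multipliers $\mu, \nu$. Rearranging and relabeling the multipliers to match the sign convention gives $\pi^\star = \exp\!\left(-\tfrac{\mu 1_m^\top + 1_n \nu^\top + \nabla g(\pi^\star)}{\lambda}\right)$, which lies in $\Pi_{a, b}$ since it equals $\pi^\star$. Invoking Definition~\ref{prop:EOT_Sinkhorn} with cost $C = \nabla g(\pi^\star)$, this matrix must coincide with $\Phi_\lambda(\nabla g(\pi^\star))$; hence $\pi^\star = \Phi_\lambda(\nabla g(\pi^\star)) = T_\lambda(\pi^\star)$, so $\pi^\star$ is a fixed point.

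For uniqueness of the fixed point I would run the equivalence backward. Let $\tilde\pi = T_\lambda(\tilde\pi)$ be any fixed point. Since $\Phi_\lambda$ maps into $\Pi_{a, b}^{+}$, the point $\tilde\pi$ is interior, and $\tilde\pi = \Phi_\lambda(\nabla g(\tilde\pi))$ says that $\tilde\pi$ minimizes the linearized entropic problem $\pi \mapsto \langle \nabla g(\tilde\pi), \pi \rangle + \lambda h(\pi)$ over $\Pi_{a, b}$. Its first-order condition at the interior point $\tilde\pi$ is $\nabla g(\tilde\pi) + \lambda \log \tilde\pi \in V^\perp$. The \emph{crucial observation} is that this is literally the stationarity condition for the original objective $g + \lambda h$ at $\tilde\pi$, because the gradient of $\pi \mapsto g(\pi) + \lambda h(\pi)$ and the gradient of $\pi \mapsto \langle \nabla g(\tilde\pi), \pi \rangle + \lambda h(\pi)$ agree at $\pi = \tilde\pi$, both equalling $\nabla g(\tilde\pi) + \lambda \log \tilde\pi$. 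Since $g + \lambda h$ is convex over the convex set $\Pi_{a, b}$, stationarity is sufficient for global optimality, so $\tilde\pi$ is the unique minimizer, i.e.\ $\tilde\pi = \pi^\star$.

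I expect the main obstacle to be the careful justification of the first-order/Lagrange characterization and its bridge to $\Phi_\lambda$. Two points need attention. The equality constraints are redundant (total mass forces $\sum_i a_i = \sum_j b_j$), so the multipliers $\mu, \nu$ are determined only up to the shift $(\mu, \nu) \mapsto (\mu + c 1_n,\ \nu - c 1_m)$; this nonuniqueness is harmless because only the matrix $\mu 1_m^\top + 1_n \nu^\top$ enters, and it is invariant under the shift. Second, one must verify that $V^\perp$ is exactly the span of rank-one matrices of the stated form, which is the standard range-of-adjoint computation for the linear map $D \mapsto (D 1_m,\ D^\top 1_n)$. Once this is in hand, matching the stationarity condition to the exponential form in Definition~\ref{prop:EOT_Sinkhorn} is immediate, and the two-directional equivalence closes the argument.
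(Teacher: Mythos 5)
Your proposal is correct and takes essentially the same route as the paper: the paper forms the Lagrangian and invokes the KKT conditions, which at an interior point reduce to exactly your stationarity condition $\nabla g(\pi) + \lambda \log \pi \in \{\mu 1_m^\top + 1_n \nu^\top : \mu \in \R^n, \nu \in \R^m\}$, rearranged into the exponential form and matched to $\Phi_\lambda$ via Definition~\ref{prop:EOT_Sinkhorn}. Your tangent-space phrasing of the multipliers and your spelled-out backward direction (fixed point $\Rightarrow$ stationarity of the linearized problem $\Rightarrow$ global optimality by convexity) merely make explicit what the paper compresses into its terse ``if and only if'' step.
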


Based on the established connection between the optimality condition of \eqref{eq:optim} and the Sinkhorn algorithm, we propose two algorithms to solve \eqref{eq:optim} in the following sections. 

\subsection{Fixed-Point Algorithm}
By Proposition \ref{prop:minimizer=fixed_point}, finding the fixed point of $T_\lambda$ is equivalent to solving \eqref{eq:optim}. Therefore, we propose solving \eqref{eq:optim} by the fixed-point iterations based on the operator $T_\lambda \colon \R^{n \times m} \to \Pi_{a, b}^{+}$, which is simply choosing the initial point $\pi^{(0)} \in \Pi_{a, b}$ properly and iterating $\pi^{(k + 1)} \leftarrow T_\lambda(\pi^{(k)})$ for $k \ge 0$, where $T_\lambda(\pi^{(k)}) = \Phi_\lambda(\nabla g(\pi^{(k)}))$ is approximated by $\mathrm{Sinkhorn}(e^{-\nabla g(\pi^{(k)}) / \lambda}, a, b)$ as explained in Definition \ref{prop:EOT_Sinkhorn}. Algorithm \ref{alg:main} summarizes this procedure.

\begin{algorithm}[ht]
    \caption{$\mathrm{FixedPoint}(g, \lambda, a, b)$}
    \label{alg:main}
    \begin{algorithmic}[1]
        \REQUIRE A map $g \colon \R^{n \times m} \to \R$, a parameter $\lambda > 0$, $a \in \Delta_n^+$, and $b \in \Delta_m^+$.
        \STATE Pick any $\pi^{(0)} \in \Pi_{a, b}$ and set $k \leftarrow 0$.
        \REPEAT 
            \STATE $\pi^{(k + 1)} \leftarrow \mathrm{Sinkhorn}(e^{-\nabla g(\pi^{(k)}) / \lambda}, a, b)$.
            \STATE $k \leftarrow k + 1$. 
        \UNTIL{Discrepancy between $\pi^{(k)}$ and $\pi^{(k - 1)}$ reaches a desired level.}
        \RETURN $\pi^{(k)}$.
    \end{algorithmic}
\end{algorithm}

We show that $T_\lambda$ is a contraction if the gradient of $g$ is Lipschitz and $\lambda$ is larger than the Lipschitz constant. In such a case, Algorithm \ref{alg:main} converges to the fixed point quickly. When $g$ is a quadratic function, we can write the Lipschitz constant in terms of the $L^\infty$ norm of the Hessian matrix, which is independent of the input dimension $n m$. We do not require $g$ to be convex for this result; $T_\lambda$ admits a unique fixed point due to the contraction argument, which is independent of Proposition \ref{prop:minimizer=fixed_point}. The role of Proposition \ref{prop:minimizer=fixed_point} is to translate this convergence result into the convergence to the minimizer of \eqref{eq:optim} for convex $g$ by equating the fixed point to the minimizer.

\begin{theorem}
	\label{thm:fixed-point_L1}
	Let $g \colon \R^{n \times m} \to \R$ be a quadratic function, with some $H \in \R^{n \times n}$ that is symmetric and $C \in \R^{n \times m}$,
	\begin{equation*}
		g(\pi) = \frac{1}{2} \langle \pi, H \pi \rangle + \langle C, \pi \rangle \quad \forall \pi \in \R^{n \times m}.
	\end{equation*}
	Assume $\lambda > \|H\|_\infty$. Then, the operator $T_\lambda$ defined in \eqref{eq:operator_T} is a contraction under the distance defined by $\|\cdot\|_1$ and has a unique fixed-point, say, $\pi^\star \in \Pi_{a, b}^{+}$. Moreover, Algorithm \ref{alg:main}, assuming the inner loop $\mathrm{Sinkhorn}$ is always exact, outputs a sequence $(\pi^{(k)})_{k \ge 0}$ such that for any $T \in \N$,
	\begin{equation}
		\label{eq:fixed_point_bound_L1}
		\|\pi^{(T)} - \pi^\star\|_1 \le \frac{(\|H\|_\infty / \lambda)^T}{1 - (\|H\|_\infty / \lambda)} \|\pi^{(1)} - \pi^{(0)}\|_1.
	\end{equation}
\end{theorem}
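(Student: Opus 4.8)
The plan is to show that $T_\lambda$ is a contraction in the $\|\cdot\|_1$ distance with modulus $\|H\|_\infty/\lambda < 1$, and then to invoke the Banach fixed-point theorem on the complete metric space $(\Pi_{a,b}, \|\cdot\|_1)$. Since $T_\lambda(\Pi_{a,b}) \subseteq \Pi_{a,b}^+ \subseteq \Pi_{a,b}$ and $\Pi_{a,b}$ is closed in a finite-dimensional space, the space is $T_\lambda$-invariant and complete, so the only substantive task is the contraction estimate; the rate \eqref{eq:fixed_point_bound_L1} is then the standard a priori estimate for contraction iterations.

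First I would record that, using symmetry of $H$, $\nabla g(\pi) = H\pi + C$, so that $T_\lambda(A) = \Phi_\lambda(HA + C)$ and $\nabla g(A) - \nabla g(B) = H(A-B)$. The contraction then factors into two pieces: a Lipschitz bound for the entropic-OT solution map $\Phi_\lambda$ in the cost, and a bound on how $\nabla g$ distorts distances. The latter is an elementary entrywise estimate: each entry $(H(A-B))_{ij} = \sum_k H_{ik}(A-B)_{kj}$ obeys $|(H(A-B))_{ij}| \le \|H\|_\infty \sum_k |(A-B)_{kj}| \le \|H\|_\infty \|A-B\|_1$, hence $\|H(A-B)\|_\infty \le \|H\|_\infty \|A-B\|_1$.

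The crux is the Lipschitz property of the solution map: for any costs $C, C'$, I claim $\|\Phi_\lambda(C) - \Phi_\lambda(C')\|_1 \le \tfrac{1}{\lambda}\|C - C'\|_\infty$. I would prove this by strong convexity. The entropy $h$ is $1$-strongly convex with respect to $\|\cdot\|_1$ on $\Pi_{a,b} \subseteq \Delta_{n,m}$ --- this is exactly Pinsker's inequality, which identifies the Bregman divergence of $h$ with the Kullback-Leibler divergence --- so each regularized objective $F_C(\pi) := \langle C, \pi\rangle + \lambda h(\pi)$ is $\lambda$-strongly convex in $\|\cdot\|_1$. Writing $\pi = \Phi_\lambda(C)$ and $\pi' = \Phi_\lambda(C')$ for the respective constrained minimizers (both in $\Pi_{a,b}^+$), the optimality-plus-strong-convexity inequalities $F_C(\pi') \ge F_C(\pi) + \tfrac{\lambda}{2}\|\pi'-\pi\|_1^2$ and $F_{C'}(\pi) \ge F_{C'}(\pi') + \tfrac{\lambda}{2}\|\pi-\pi'\|_1^2$, when summed, cancel the entropy contributions and leave $\langle C - C', \pi' - \pi\rangle \ge \lambda\|\pi-\pi'\|_1^2$. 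Bounding the left side by Hölder's inequality under the $(\|\cdot\|_\infty, \|\cdot\|_1)$ dual pairing and dividing by $\|\pi-\pi'\|_1$ gives the claim.

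Chaining the two bounds yields $\|T_\lambda(A) - T_\lambda(B)\|_1 \le \tfrac{1}{\lambda}\|H(A-B)\|_\infty \le \tfrac{\|H\|_\infty}{\lambda}\|A-B\|_1$, which is a genuine contraction precisely because $\lambda > \|H\|_\infty$; Banach then furnishes the unique fixed point $\pi^\star \in \Pi_{a,b}^+$. Finally, inducting on $\|\pi^{(k+1)} - \pi^{(k)}\|_1 \le (\|H\|_\infty/\lambda)^k \|\pi^{(1)} - \pi^{(0)}\|_1$ and summing a geometric tail from $T$ to $\infty$ produces \eqref{eq:fixed_point_bound_L1}. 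I expect the main obstacle to be the Lipschitz lemma for $\Phi_\lambda$: the contraction modulus comes out as exactly $\|H\|_\infty/\lambda$ only because one uses the $\ell^1$-strong convexity of the entropy (rather than Euclidean strong convexity), so that its dual $\ell^\infty$ norm appears and pairs correctly with the entrywise $\|H\|_\infty$ bound. The remaining steps are routine.
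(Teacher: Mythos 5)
Your proposal is correct and follows essentially the same route as the paper: the key Lipschitz lemma $\|\Phi_\lambda(C)-\Phi_\lambda(C')\|_1 \le \|C-C'\|_\infty/\lambda$ obtained from the $\ell^1$-strong convexity of the entropy (Pinsker) paired with H\"older duality, the entrywise bound $\|H(A-B)\|_\infty \le \|H\|_\infty\|A-B\|_1$, and then Banach's fixed-point theorem with the standard geometric a priori estimate. The only (immaterial) difference is that you derive the stability estimate by summing two zeroth-order ``optimality plus strong convexity'' inequalities, whereas the paper combines the two first-order variational inequalities and invokes gradient monotonicity of $h$ --- these are equivalent forms of the same argument.
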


\begin{remark}
	Theorem \ref{thm:fixed-point_L1} shows that when $\lambda > \|H\|_\infty$, to obtain an $\epsilon$-approximate fixed point $\|\pi^{(T)} - \pi^\star\|_1^2 \leq \epsilon$ , we need $\log(1 / \epsilon) / 2\log(\lambda / \|H\|_\infty)$ number of Sinkhorn routines---a dimension-free quantity.
\end{remark}

\subsection{Local Algorithm with Kullback-Leibler Geometry}
The fixed-point algorithm may not converge for sufficiently small $\lambda$. To fill in the gap when $\lambda \in [0, \| H\|_{\infty})$ left unanswered by Theorem~\ref{thm:fixed-point_L1}, we employ the steepest descent method under Bregman divergence to solve \eqref{eq:optim} as follows: let $f = g + \lambda h$,
\begin{equation}
	\label{eqn-steepest-descent}
    \pi^{(k + 1)} = \argmin_{\pi \in \Pi_{a, b}} \left(\langle \nabla f(\pi^{(k)}), \pi \rangle + \frac{D_h(\pi, \pi^{(k)})}{\tau_k}\right), 
\end{equation}
where $D_h(\pi, \pi^{(k)}) = h(\pi) - h(\pi^{(k)}) - \langle \nabla h(\pi^{(k)}), \pi - \pi^{(k)} \rangle$ is the Bregman divergence, which equals the Kullback-Leibler divergence on $\Pi_{a, b}$, and $\tau_k > 0$ is a suitable step size. With the entropic regularization $h$, we implement the steepest descent over the polytope $\Pi_{a, b}$ under the Kullback-Leibler divergence. We emphasize that this can be solved using the Sinkhorn algorithm as well since \eqref{eqn-steepest-descent} is equivalent to
\begin{equation*}
	\begin{split}
		\pi^{(k + 1)} 
		& = \argmin_{\pi \in \Pi_{a, b}} \left(\langle \nabla g(\pi^{(k)}) + (\lambda - \tau_k^{-1}) \nabla h(\pi^{(k)}), \pi \rangle + \tau_k^{-1} h(\pi)\right) \\
		& = \Phi_{\tau_k^{-1}}\left(\nabla g(\pi^{(k)}) + (\lambda - \tau_k^{-1})  \nabla h(\pi^{(k)})\right).
	\end{split}
\end{equation*}
The above can be approximated by applying the Sinkhorn algorithm to $\nabla g(\pi^{(k)}) + (\lambda - \tau_k^{-1})  \nabla h(\pi^{(k)})$. Algorithm \ref{alg:sdkl} summarizes this procedure. Note that letting $\tau_k = \lambda^{-1}$ for all $k \ge 0$ in Algorithm \ref{alg:sdkl} recovers Algorithm \ref{alg:main}.

\begin{algorithm}
    \caption{$\mathrm{SteepestDescentKL}(g, \lambda, \{\tau_k\}_{k \ge 0}, a, b)$}
    \label{alg:sdkl}
    \begin{algorithmic}[1]
        \REQUIRE A map $g \colon \R^{n \times m} \to \R$, a parameter $\lambda > 0$, $a \in \Delta_n^+$, and $b \in \Delta_m^+$.
        \REQUIRE Step sizes $\{\tau_k\}_{k \ge 0}$.
		\STATE Pick any $\pi^{(0)} \in \Pi_{a, b}$ and set $k \leftarrow 0$.
        \REPEAT 
            \STATE $\pi^{(k + 1)} \leftarrow \mathrm{Sinkhorn}(e^{-(\nabla g(\pi^{(k)}) + (\lambda - \tau_k^{-1}) \nabla h(\pi^{(k)})) / \tau_k^{-1}}, a, b)$.
            \STATE $k \leftarrow k + 1$. 
        \UNTIL{Discrepancy between $\pi^{(k)}$ and $\pi^{(k - 1)}$ reaches a desired level.}
        \RETURN $\pi^{(k)}$.
    \end{algorithmic}
\end{algorithm}

We show that Algorithm \ref{alg:sdkl} outputs a sequence that converges to the minimum of \eqref{eq:optim} provided the step size is below a certain threshold.
\begin{theorem}
	\label{thm:steepest-descent-KL-convergence}
	Let $g \colon \R^{n \times m} \to \R$ be a convex quadratic function, for some $H \in \R^{n \times n}$ that is symmetric and positive semidefinite and $C \in \R^{n \times m}$,
	\begin{equation*}
		g(\pi) = \frac{1}{2} \langle \pi, H \pi \rangle + \langle C, \pi \rangle \quad \forall \pi \in \R^{n \times m}.
	\end{equation*}
	Then, for any $\lambda \geq 0$, if Algorithm \ref{alg:sdkl}, assuming the inner loop $\mathrm{Sinkhorn}$ is always exact, with constant stepsize $\tau_k = \tau$ for all $k \ge 0$, where $\tau^{-1} \geq \|H\|_\infty + \lambda$, it outputs a sequence $(\pi^{(k)})_{k \ge 0}$ such that for any $T \in \N$,
	\begin{equation}
		\label{eq:steepest-descent-KL-convergence}
		g(\pi^{(T)}) + \lambda h(\pi^{(T)}) - \min_{\pi \in \Pi_{a, b}} (g(\pi) + \lambda h(\pi)) \le \frac{1}{T} \frac{\log(n m)}{\tau}.
	\end{equation}
	For $\lambda >0$, the following holds as well:
	\begin{equation}
		\label{eq:convergence_to_minimizer}
		\|\pi^{(T)} - \pi^\star\|_1^2 \leq \frac{1}{T} \frac{2 \log(n m)}{\lambda \tau},
	\end{equation}
	where $\pi^\star \in \Pi_{a, b}^+$ is the unique minimizer of \eqref{eq:optim}.
\end{theorem}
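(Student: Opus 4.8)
The plan is to read the update \eqref{eqn-steepest-descent} as mirror descent on $f = g + \lambda h$ over the polytope $\Pi_{a,b}$, and to run the relative-smoothness analysis \emph{entirely in the $\ell_1$/Kullback--Leibler geometry} rather than the Euclidean one. First I would record that $\Pi_{a,b}\subset\Delta_{n,m}$, since every $\pi\in\Pi_{a,b}$ has $\sum_{i,j}\pi_{ij}=\sum_i a_i = 1$; hence all iterates are probability matrices and $D_h$ coincides with the Kullback--Leibler divergence $\mathrm{KL}$ on them. The target is a one-step inequality $f(\pi^{(k+1)}) - f(\pi) \le \tau^{-1}(D_h(\pi,\pi^{(k)}) - D_h(\pi,\pi^{(k+1)}))$ valid for every $\pi\in\Pi_{a,b}$, which I will then telescope.

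The core step is a descent lemma whose relative-smoothness constant is exactly $\|H\|_\infty + \lambda$. Setting $\delta = \pi^{(k+1)} - \pi^{(k)}$ and using that $g$ is an exact quadratic gives
\[ f(\pi^{(k+1)}) = f(\pi^{(k)}) + \langle\nabla f(\pi^{(k)}),\delta\rangle + \tfrac12\langle\delta,H\delta\rangle + \lambda D_h(\pi^{(k+1)},\pi^{(k)}). \]
I would bound the quadratic term entrywise by $\langle\delta,H\delta\rangle\le\|H\|_\infty\|\delta\|_1^2$, obtained by summing $\sum_{i,i'}|\delta_{ij}|\,|H_{ii'}|\,|\delta_{i'j}|\le\|H\|_\infty(\sum_i|\delta_{ij}|)^2$ over columns $j$ and using $\sum_j b_j^2\le(\sum_j b_j)^2$, and then apply Pinsker's inequality $D_h(\pi^{(k+1)},\pi^{(k)})\ge\tfrac12\|\delta\|_1^2$ to turn this into $\|H\|_\infty D_h(\pi^{(k+1)},\pi^{(k)})$. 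The result is $f(\pi^{(k+1)}) \le f(\pi^{(k)}) + \langle\nabla f(\pi^{(k)}),\delta\rangle + (\|H\|_\infty+\lambda)D_h(\pi^{(k+1)},\pi^{(k)})$, so the hypothesis $\tau^{-1}\ge\|H\|_\infty+\lambda$ makes $\tau^{-1}$ an admissible smoothness constant.

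Next I would combine this with the first-order optimality of the argmin in \eqref{eqn-steepest-descent}, namely $\langle\nabla f(\pi^{(k)}) + \tau^{-1}(\nabla h(\pi^{(k+1)}) - \nabla h(\pi^{(k)})),\pi-\pi^{(k+1)}\rangle\ge 0$, the three-point identity $\langle\nabla h(\pi^{(k+1)}) - \nabla h(\pi^{(k)}),\pi-\pi^{(k+1)}\rangle = D_h(\pi,\pi^{(k)}) - D_h(\pi,\pi^{(k+1)}) - D_h(\pi^{(k+1)},\pi^{(k)})$, and the convexity of $f$ (which gives $f(\pi^{(k)}) + \langle\nabla f(\pi^{(k)}),\pi-\pi^{(k)}\rangle\le f(\pi)$). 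These three ingredients collapse to the telescoping recursion above. Summing over $k=0,\dots,T-1$ at $\pi=\pi^\star$, and observing that the same recursion at $\pi=\pi^{(k)}$ forces $f(\pi^{(k)})$ to be nonincreasing, yields $f(\pi^{(T)}) - f(\pi^\star)\le \tau^{-1}D_h(\pi^\star,\pi^{(0)})/T$. Initializing at the product coupling $\pi^{(0)} = a b^\top\in\Pi_{a,b}^{+}$ bounds $D_h(\pi^\star,\pi^{(0)})=\mathrm{KL}(\pi^\star\|a b^\top)\le\log(nm)$, because the cross term reduces to the marginal Shannon entropies (at most $\log n$ and $\log m$) while the negative entropy of $\pi^\star$ is nonpositive; this establishes \eqref{eq:steepest-descent-KL-convergence}.

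Finally, for $\lambda>0$ I would pass to the iterate bound \eqref{eq:convergence_to_minimizer} via strong convexity: since $\lambda h$ is $\lambda$-strongly convex in $\|\cdot\|_1$ (Pinsker again) and $g$ is convex, $D_f(\pi,\pi^\star)\ge\lambda D_h(\pi,\pi^\star)\ge\tfrac\lambda2\|\pi-\pi^\star\|_1^2$; together with $\langle\nabla f(\pi^\star),\pi^{(T)}-\pi^\star\rangle\ge 0$ (optimality of $\pi^\star$ over $\Pi_{a,b}$) this gives $f(\pi^{(T)}) - f(\pi^\star)\ge\tfrac\lambda2\|\pi^{(T)}-\pi^\star\|_1^2$, and dividing the bound \eqref{eq:steepest-descent-KL-convergence} by $\lambda/2$ produces \eqref{eq:convergence_to_minimizer}. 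I expect the main obstacle to be precisely the choice of geometry: a Euclidean descent lemma would produce the spectral norm $\|H\|_{\mathrm{op}}$, potentially a factor of $n$ larger than $\|H\|_\infty$, and hence a far more restrictive step-size condition. It is the pairing of the entrywise quadratic-form estimate with Pinsker's inequality that delivers the dimension-robust constant $\|H\|_\infty$ matching the hypothesis $\tau^{-1}\ge\|H\|_\infty+\lambda$, and verifying this interplay carefully is the crux of the proof.
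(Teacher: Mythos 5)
Your proposal is correct and follows essentially the same route as the paper's proof: the variational inequality for the argmin, the three-point identity for $D_h$, the relative-smoothness bound $D_f(\pi^{(k+1)},\pi^{(k)}) \le (\|H\|_\infty+\lambda)D_h(\pi^{(k+1)},\pi^{(k)})$ via the H\"older-type estimate $\langle\delta,H\delta\rangle\le\|H\|_\infty\|\delta\|_1^2$ plus Pinsker, monotone telescoping, and the same $\lambda$-strong-convexity argument for \eqref{eq:convergence_to_minimizer}. Your one genuine refinement is making the initialization explicit: the paper asserts $D_h(\pi,\pi^{(0)})\le\log(nm)$ for an arbitrary $\pi^{(0)}\in\Pi_{a,b}$, whereas you correctly observe this needs a concrete choice such as the product coupling $\pi^{(0)}=ab^\top$, for which the bound follows from the marginal-entropy computation you give.
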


\begin{remark}
	The $\lambda \in [\| H\|_{\infty}, \infty)$ case has been studied in Theorem~\ref{thm:fixed-point_L1}.
	Theorem~\ref{thm:steepest-descent-KL-convergence} shows that for any $\lambda \in [0, \| H\|_{\infty})$, with the choice $\tau^{-1} = 2\|H\|_\infty$, we need $1/\epsilon \cdot 2\| H\|_{\infty}\log(n m)$ number of Sinkhorn subroutines to get $\epsilon$-close to the objective value. Further more, when $\lambda$ is strictly positive, we can assure that after $1/\epsilon \cdot 4\| H\|_{\infty}\log(n m)/\lambda$ number of Sinkhorn subroutines, $\| \pi^{(T)} - \pi^\star \|_1^2 \leq \epsilon$, with a logarithmic dependence on the problem dimension.
\end{remark}

\section{Applications: Revisit the NSW Data}
\label{sec-application}
In this section, we numerically investigate our imputation and individualized inference method using the NSW data and compare it with other methods, continued from Section \ref{sec:numerical_illustration}.

\subsection{Imputation}
We conduct a robustness check of the imputation results in Section \ref{sec:numerical_illustration} on the NSW data by specifying different nonlinear kernels. We consider the RBF kernel and the polynomial kernel of degree two. The results are shown in Figure \ref{fig:NSW_imputation_scatterplots_rbf} and Figure \ref{fig:NSW_imputation_scatterplots_poly}. Though the overall patterns are similar to those using the linear kernel, we can observe some differences. First, for the experimental data, the imputed counterfactual outcomes via the RBF kernel with $\lambda = 0.001$ show the largest variability. In Figure \ref{fig:NSW_imputation_scatterplots_rbf}(a), there are two treated units whose imputed counterfactual outcomes are above 15000, which do not occur in the results via the linear or polynomial kernel; as a result, the right tail of the histogram along the $x$-axis is longer than the other cases. For the PSID data, we can see that using the polynomial kernel leads to the most spread out imputed outcomes, visualized by the right tails of the histograms along the $x$-axis in (a) and (b) of Figure \ref{fig:NSW_imputation_scatterplots_poly} which are longer than the other cases. We point out that unlike the NN-matching method and our KSC method, whose counterfactual distributions are supported on $\R_+$ even for the nonexperimental PSID data, the regression imputation method presents a wider support for imputed values, even with negative outcomes as seen across Figures~\ref{fig:NSW_imputation_scatterplots}, \ref{fig:NSW_imputation_scatterplots_rbf} and \ref{fig:NSW_imputation_scatterplots_poly}.

\begin{figure}[!ht]
	\centering
	\subfloat[NSW $\lambda = 0.001$]{\includegraphics*[width=0.47\textwidth]{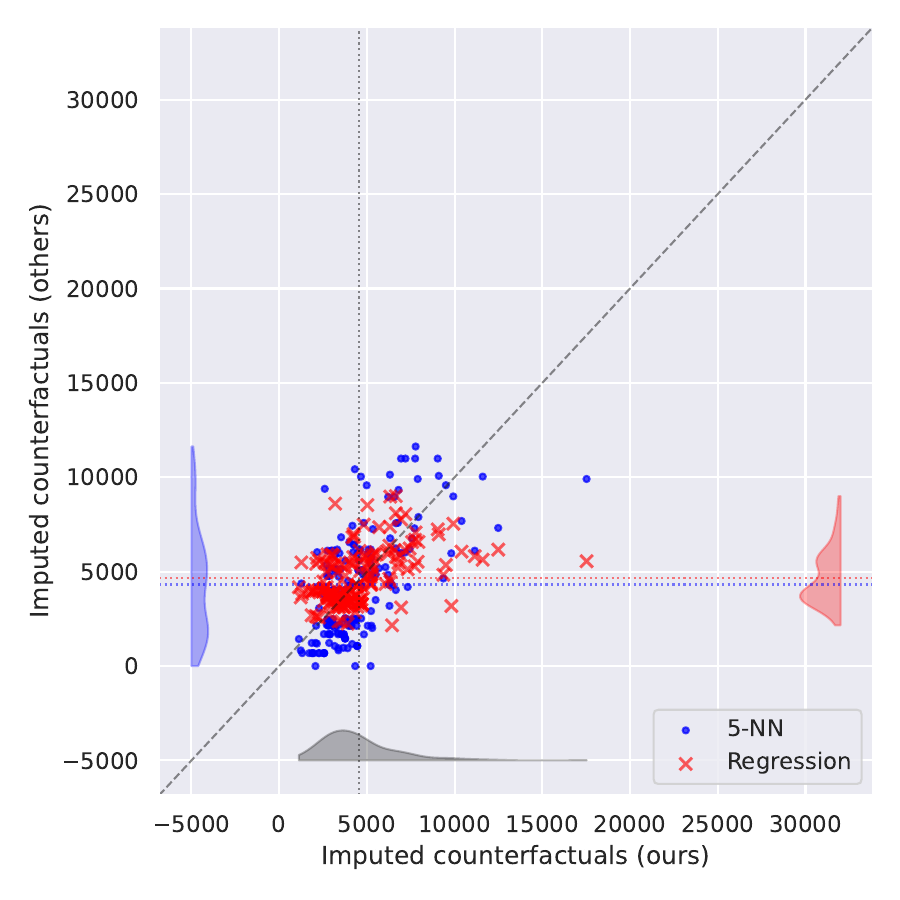}}
	\subfloat[NSW $\lambda = 0.01$]{\includegraphics*[width=0.47\textwidth]{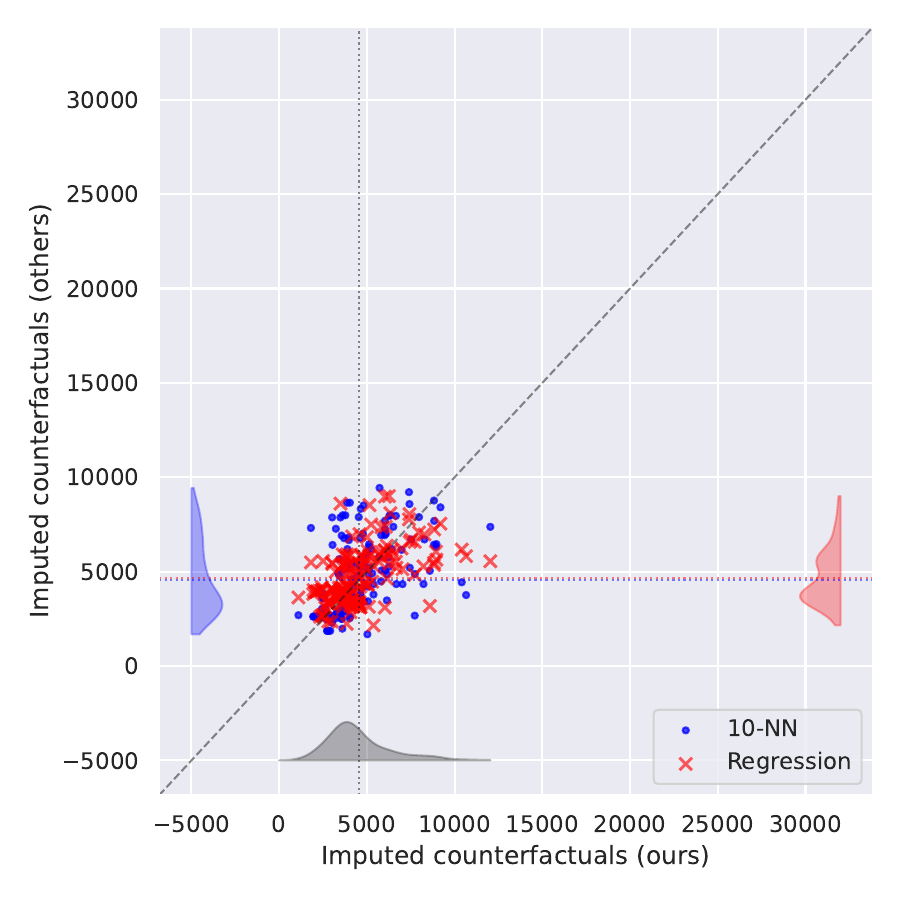}}
	\qquad
	\subfloat[PSID $\lambda = 0.001$]{\includegraphics*[width=0.47\textwidth]{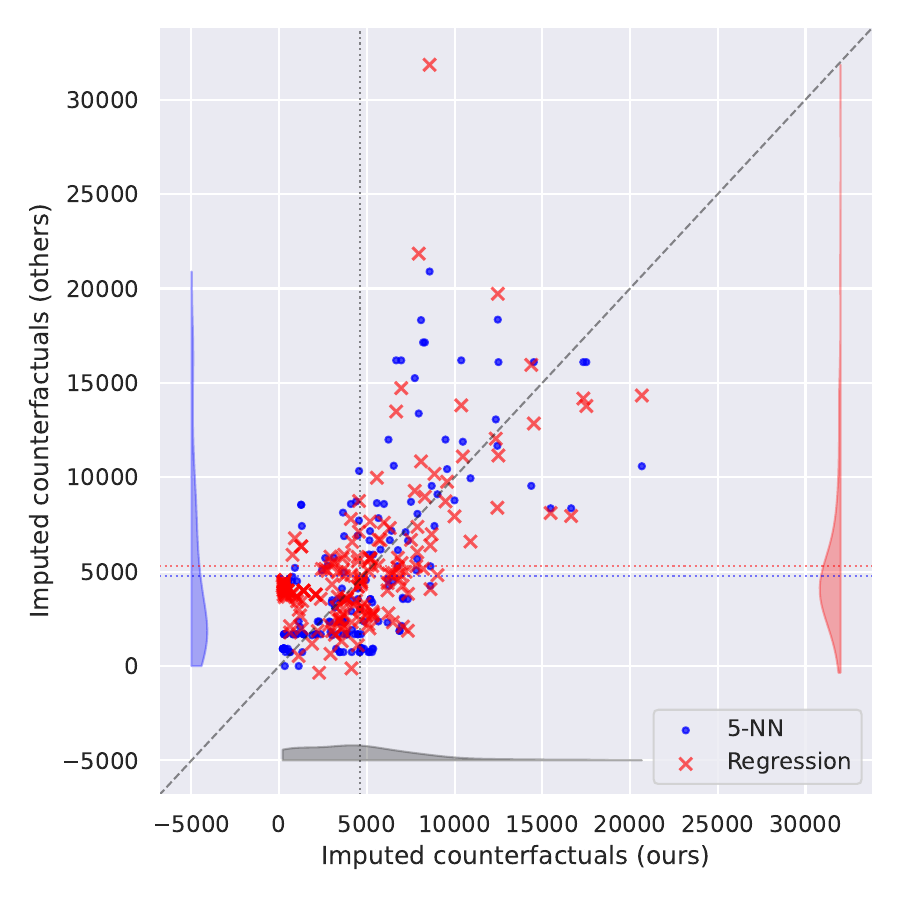}}
	\subfloat[PSID $\lambda = 0.01$]{\includegraphics*[width=0.47\textwidth]{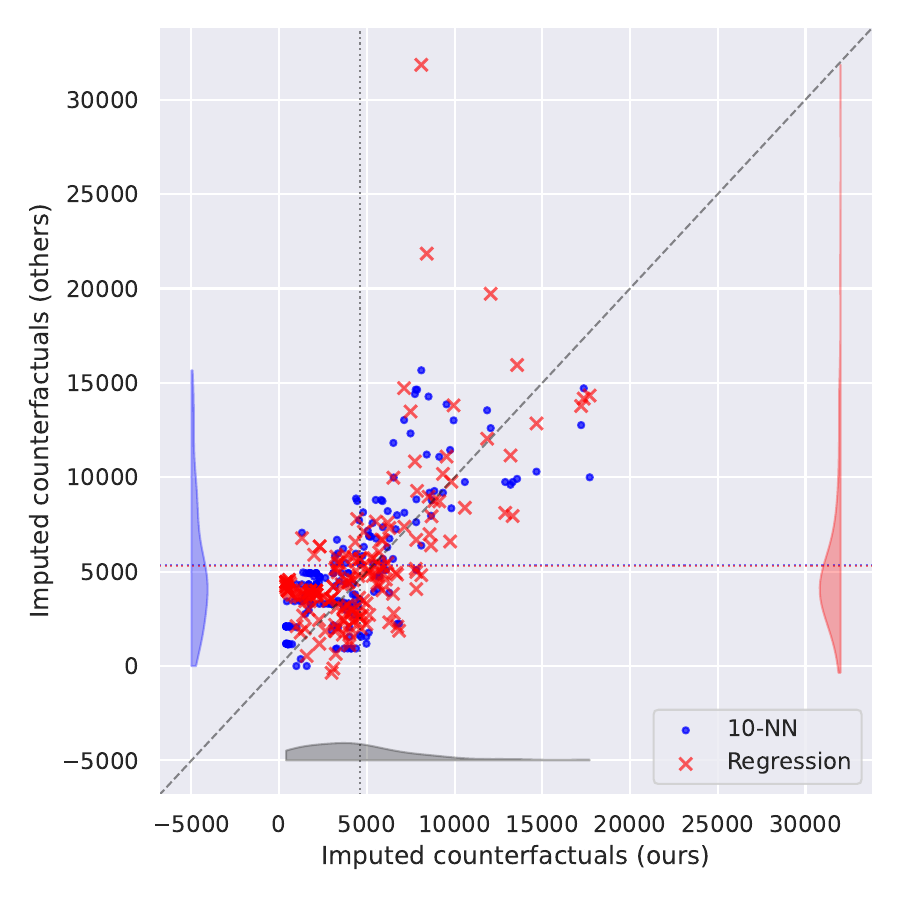}}
	\caption{\small Scatter plots of the imputed counterfactual outcomes of the treated along with the histograms of the marginal distributions. The setup is the same as in Figure \ref{fig:NSW_imputation_scatterplots}, but the results are based on the RBF kernel for both NSW and PSID data.}
	\label{fig:NSW_imputation_scatterplots_rbf}
\end{figure}

\begin{figure}[!ht]
	\centering
	\subfloat[NSW $\lambda = 0.001$]{\includegraphics*[width=0.47\textwidth]{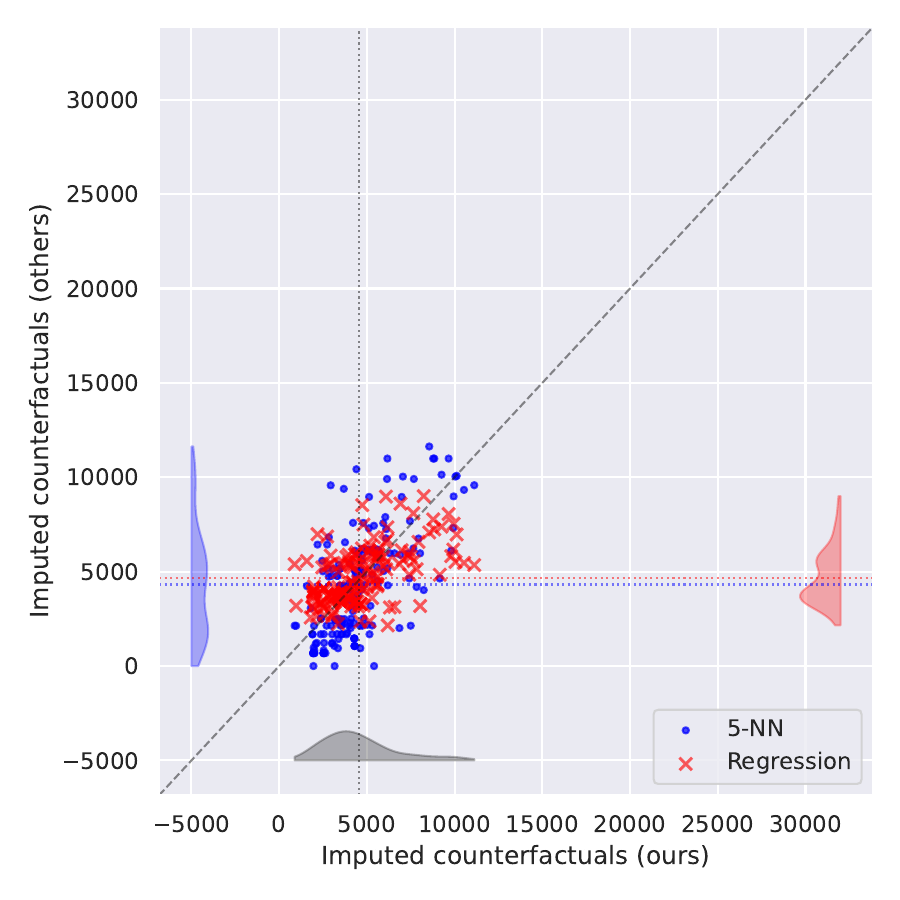}}
	\subfloat[NSW $\lambda = 0.01$]{\includegraphics*[width=0.47\textwidth]{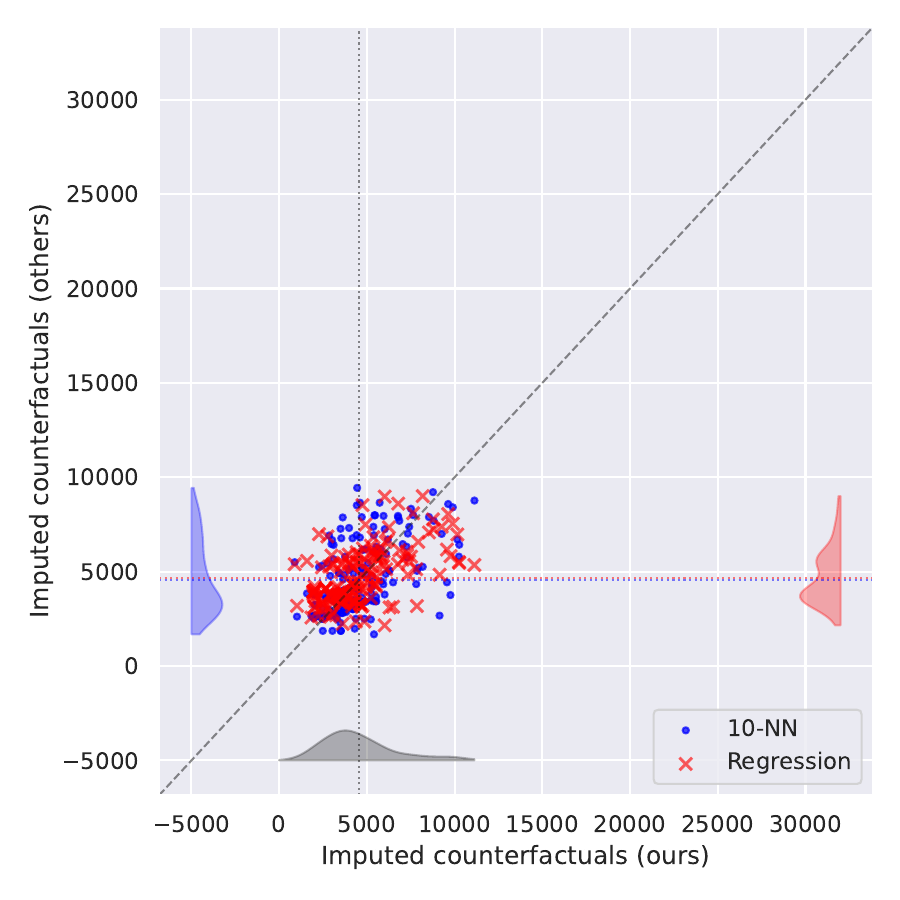}}
	\qquad
	\subfloat[PSID $\lambda = 0.001$]{\includegraphics*[width=0.47\textwidth]{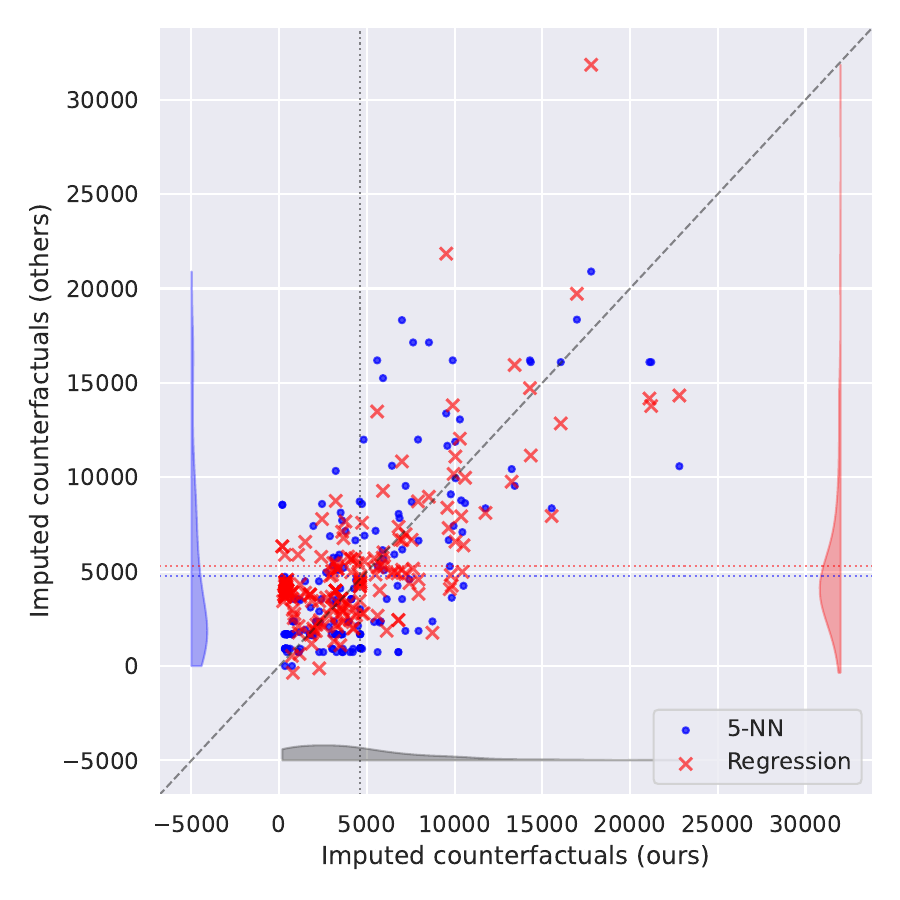}}
	\subfloat[PSID $\lambda = 0.01$]{\includegraphics*[width=0.47\textwidth]{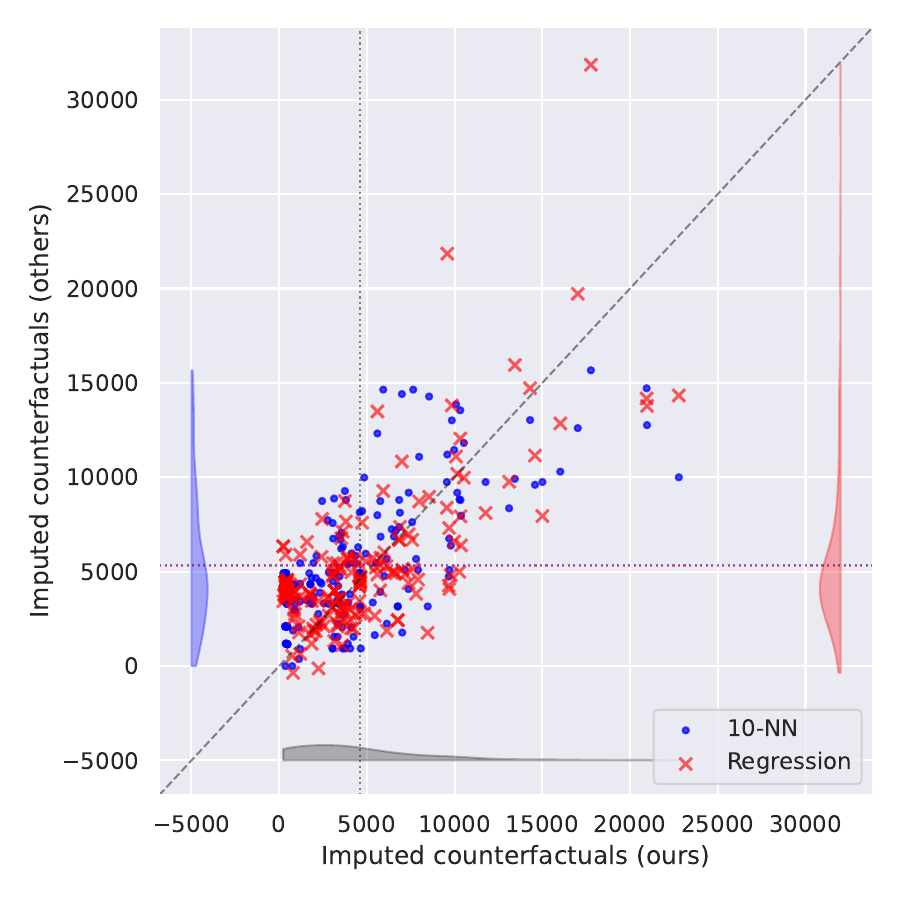}}
	\caption{\small Scatter plots of the imputed counterfactual outcomes of the treated along with the histograms of the marginal distributions. The setup is the same as in Figure \ref{fig:NSW_imputation_scatterplots}, but the results are based on the polynomial kernel of degree two for both NSW and PSID data.}
	\label{fig:NSW_imputation_scatterplots_poly}
\end{figure}

\subsection{Inference}
We apply the method presented in Section \ref{sec-inference} to construct confidence intervals around the imputed counterfactual outcomes. Figure \ref{fig:NSW_ITEvs75} (a) shows the results for the experimental data using the linear kernel with $\lambda \in \{0.001, 1\}$, which plots the estimated individual treatment effects (ITEs), $\widehat{\tau}_j = Y_j - \widehat{Y}_j(0)$, against the logarithm of earnings in 1975, where the confidence intervals are shown as error bars around the estimated ITEs. Here, only 74 treated units (out of 185 treated units) with positive earnings in 1975 are shown. We pick this index merely for visualization of individualization. For comparison, we also show the blue points representing the ITEs based on imputing all the treated units with the mean of the control group outcomes, along with the confidence intervals based on the standard error of the mean; the case where there is no individualization in the imputed control outcomes, and thus no meaningful individual confidence intervals addressing the bias. Let us first focus on the point estimates of the ITEs. We can see that most of the estimated ITEs fall in the interval $(-10000, 10000)$, seemingly uncorrelated with the earnings in 1975, while there are seven individuals whose ITEs are above 10000. These seven units with the largest ITEs are likely to be the ones who benefit most from the treatment, standing out from the rest of the treatment group. The confidence intervals help us decide how sure we are about these seemingly significant ITEs. We can confidently say that the top two ITEs are extremely large compared to the rest of the ITEs since the lower ends of their individual confidence intervals are larger than the upper ends of the rest of the ITEs, regardless of the choice of $\lambda$. Now, looking at the confidence intervals, rather than only looking at the point estimates, the four units whose ITEs are between 10000 and 30000 are less distinguishable from the rest. The error bars do help to suggest which units benefit the more here. For the third largest ITE, around 20000, the conclusion may vary depending on $\lambda$. If we read the confidence intervals corresponding to $\lambda = 1$ (red, dashed), this unit seems to be more significant than the rest, while this conclusion is less assertive for $\lambda = 0.001$ as the confidence interval of this unit overlaps with the interval $(-15000, 20000)$ which contains the rest of the confidence intervals. In other words, using $\lambda = 0.001$ leads to the most conservative conclusion regarding who benefits the most from the treatment as the confidence intervals are wider. Unlike the confidence intervals obtained by the proposed method, the intervals based on the mean imputation (shown in blue) are extremely narrow, which does not take into account individual heterogeneity, overlooking a potentially large bias; the confidence interval here is useless for individual decision making and may result in overly optimistic conclusions, for instance, essentially the majority benefits significantly from the program. (b) and (c) of Figure \ref{fig:NSW_ITEvs75} repeat the same analysis using the RBF kernel and the polynomial kernel of degree two, respectively. The overall patterns are similar to those using the linear kernel, but the confidence intervals tend to be wider. Accordingly, even the second largest ITE above 30000 is not as distinguishable from the rest as in the linear kernel case. 

The results based on the PSID data are shown in Figure \ref{fig:PSIDtrim-IPW_ITEvs75}. One notable difference is that the intervals are generally much wider than those of the NSW experimental data. One possible explanation is that the PSID data, even after trimmed, differs significantly from the experimental data, leading to the matching with more conservative error estimates. Accordingly, the confidence intervals are wider, which makes it harder to make a decisive conclusion with confidence.

Lastly, we briefly comment on the choice of $\rho$, the regularization parameter of the kernel ridge regression mentioned in Section \ref{sec:confidence_intervals_method}. Increasing $\rho$ leads to a smaller estimate of $\|f_0\|_\cH$, leading to a smaller bias correction. Increasing $\rho$ will typically result in a larger estimate of $\sigma_0$, implying a large variability. The correct choice of $\rho$ depends on the underlying signal-to-noise ratio for the data-generating process in the control outcomes, namely, the complexity of the function vs.\ the amount of noise. In all examples of this section, we choose $\rho$ using 5-fold cross-validation. 

\begin{figure}
	\centering
	\subfloat[Linear]{\includegraphics*[width=0.99\textwidth]{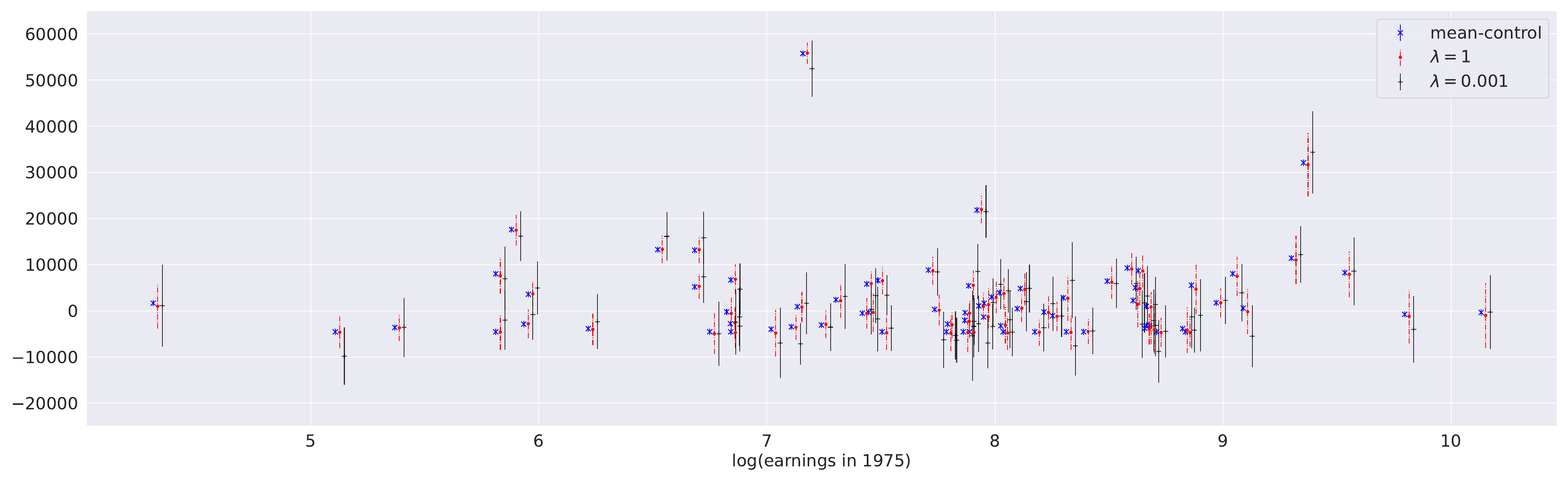}}
	\qquad
	\subfloat[RBF]{\includegraphics*[width=0.99\textwidth]{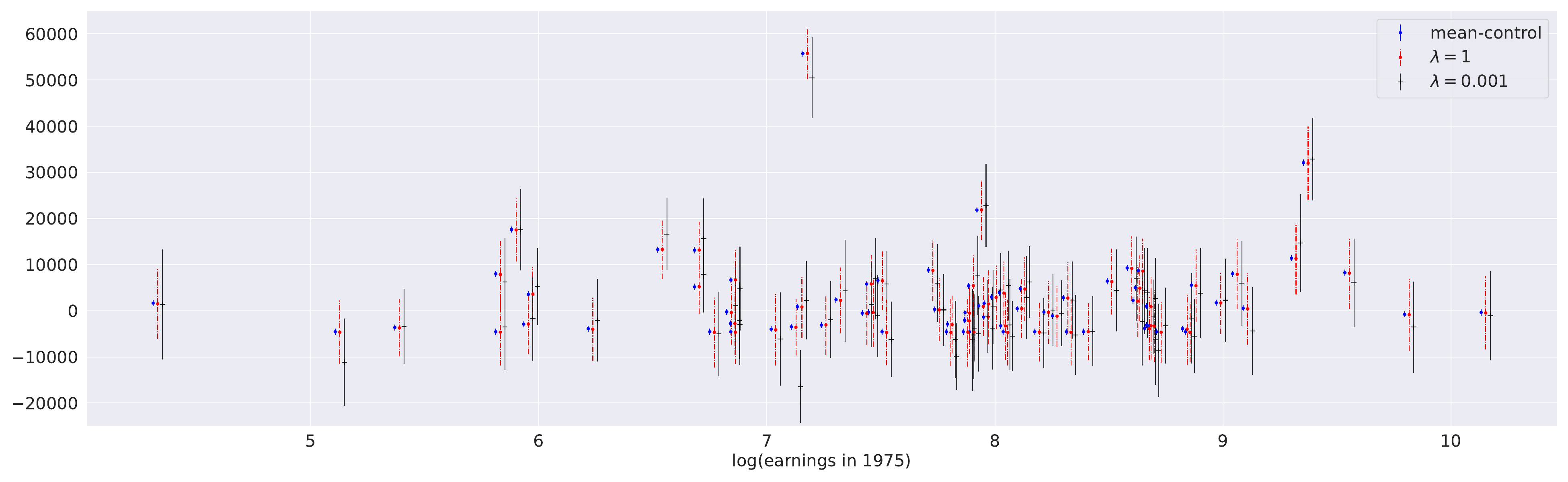}}
	\qquad
	\subfloat[Polynomial]{\includegraphics*[width=0.99\textwidth]{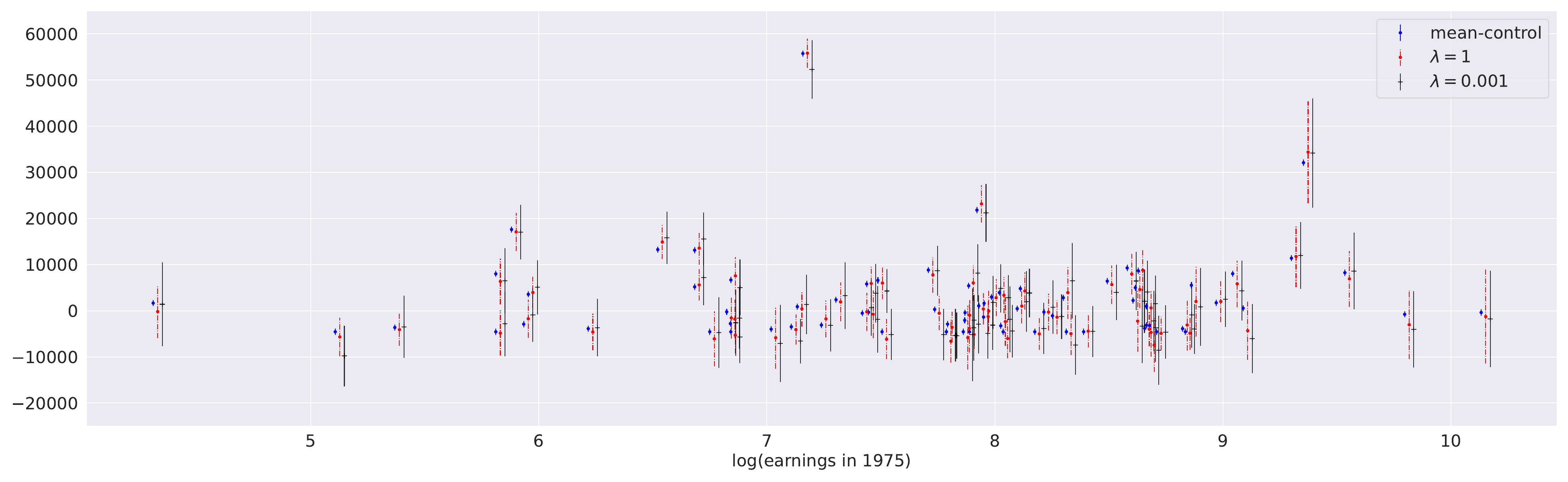}}
	\caption{\small NSW experimental data: estimated individual treatment effects (ITEs) of the treated together with the confidence intervals along the earnings before the treatment. Confidence intervals are shown as error bars around the estimated ITEs. 74 treated units (out of 185 treated units) with positive earnings in 1975 are shown. The $x$-axis is the logarithm of earnings in 1975, while the $y$-axis is the estimated ITEs. The results are based on the experimental data using uniform weights $v, w$ in \eqref{eq:kernelized_formulation} with three different kernels: linear, RBF, and polynomial of degree two. The blue points correspond to imputing all the treated units with the mean of the control group outcomes, where the error bars show the confidence interval based on the standard error of the mean.}
	\label{fig:NSW_ITEvs75}
\end{figure}

\begin{figure}
	\centering
	\subfloat[Linear]{\includegraphics*[width=0.99\textwidth]{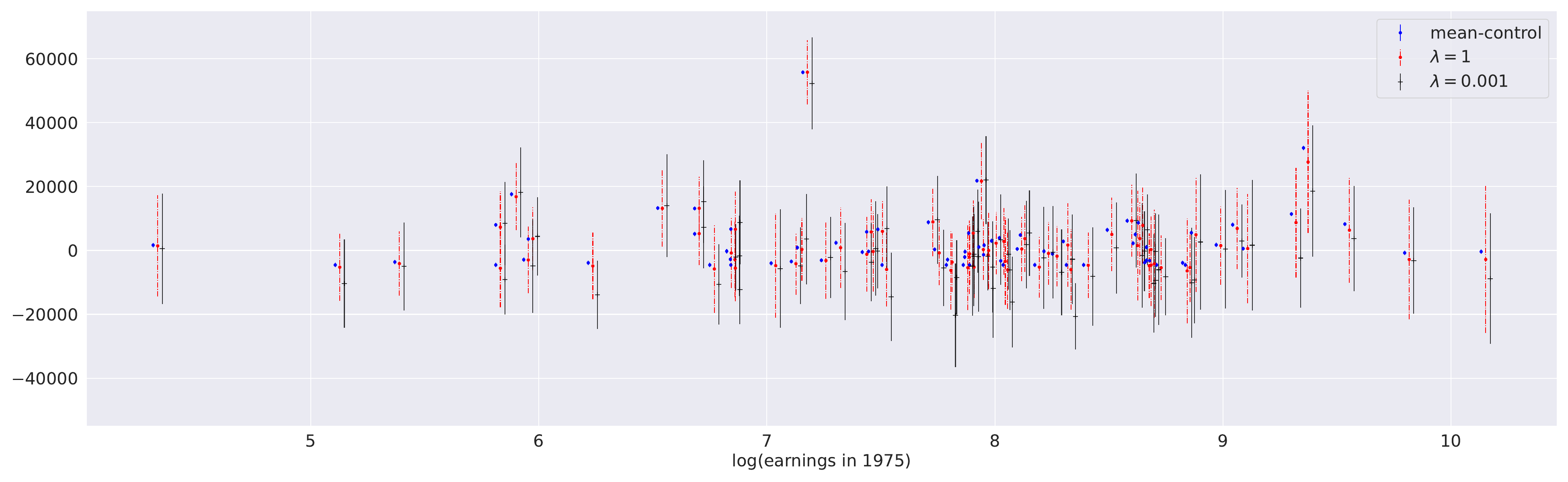}}
	\qquad
	\subfloat[RBF]{\includegraphics*[width=0.99\textwidth]{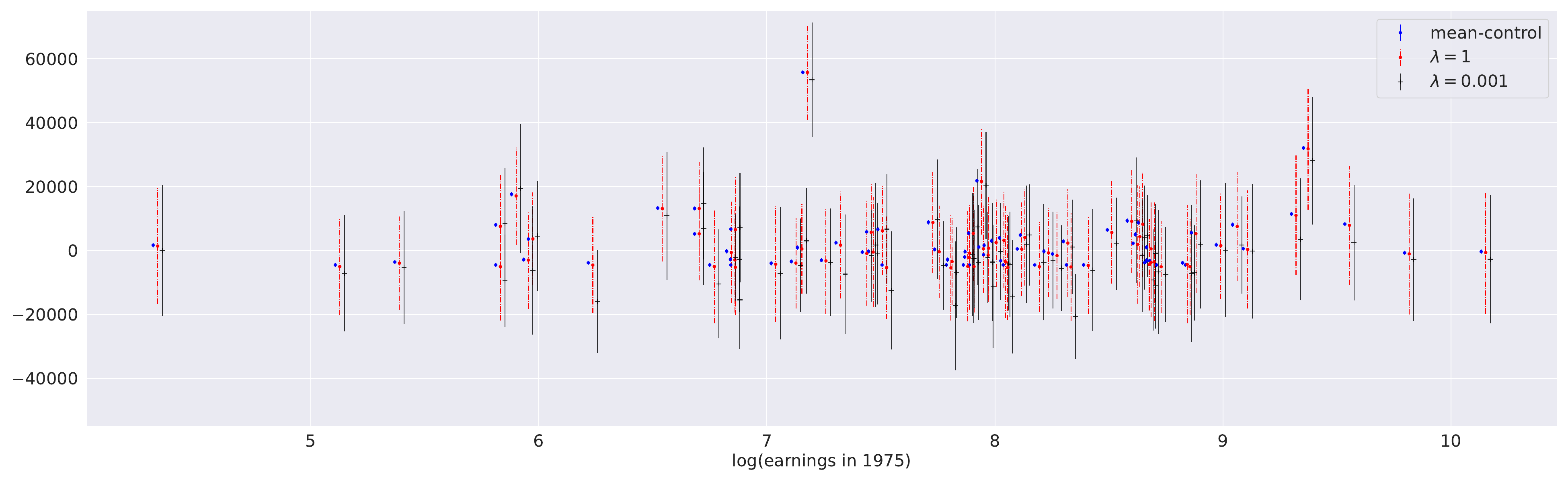}}
	\qquad
	\subfloat[Polynomial]{\includegraphics*[width=0.99\textwidth]{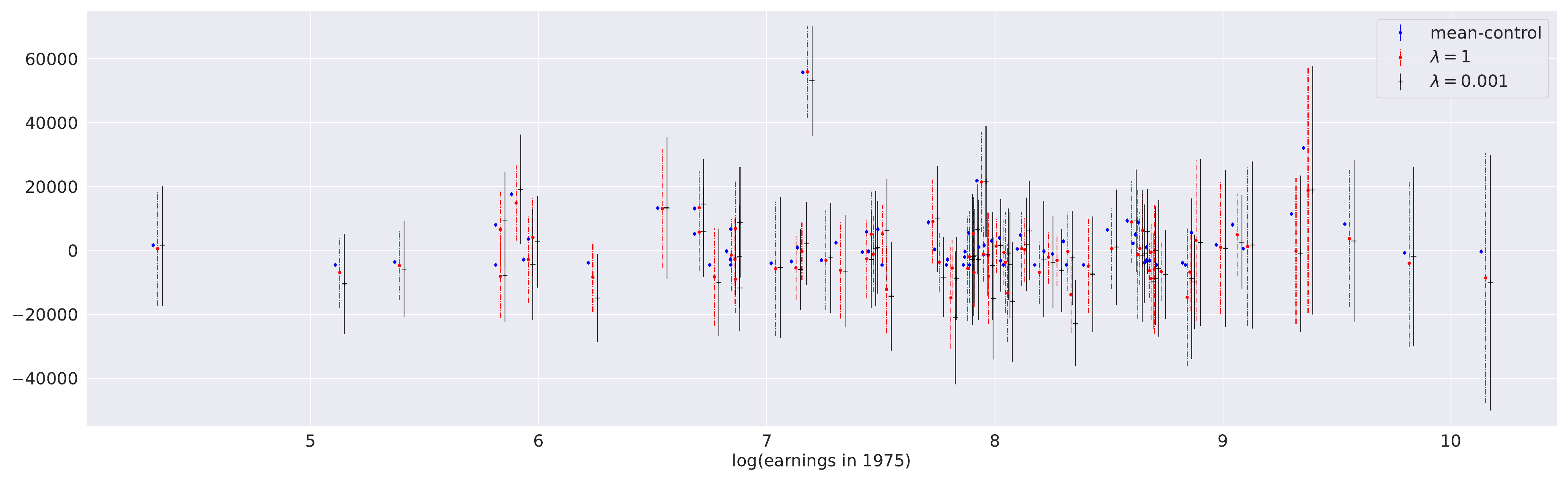}}
	\caption{\small PSID nonexperimental data: estimated individual treatment effects (ITEs) of the treated along the earnings before the treatment. The setup is the same as in Figure \ref{fig:NSW_ITEvs75}, but the results are based on the trimmed PSID data. For the weights $v, w$ in \eqref{eq:kernelized_formulation}, we let $v$ be uniform, while $w$ is the propensity score-based weights as in \eqref{eq:propensity_score_weights}.}
	\label{fig:PSIDtrim-IPW_ITEvs75}
\end{figure}

\section{Conclusion}
We proposed a convexified matching method for missing value imputation and individualized inference, integrating favorable features from optimal matching, regression imputation, and synthetic control. We impute counterfactual outcomes based on convex combinations of observed outcomes, defined by an optimal coupling between the treated and control data sets. Finding an optimal coupling is a convex relaxation to the combinatorial optimal matching problem, for which we propose efficient algorithms based on matrix scaling. Unlike existing imputation methods, we begin with a desirable aggregate-level summary and estimate granular-level individual treatment effects by properly constraining the coupling so that the estimated individual effects are consistent with the aggregate summary. We provided a method to construct individual confidence intervals for the estimated counterfactual outcomes, along with a simulation study to demonstrate the effectiveness of our method. The simulation confirms our theoretical insight on the level of entropic regularization needed. We establish that entropic regularization plays a crucial role in both efficiency in inference and computation: first, the convexified matching objective is gauged for minimizing the width of the individual confidence intervals, trading off bias and variance; second, the entropic regularization enables us to design fast algorithms. We demonstrated the empirical performance of our method on the NSW data, using both experimental and nonexperimental data sets, with a comparison to existing methods and robustness checks.

\printbibliography

\newpage 
\appendix
\section{Proofs}
\subsection{Proof of Proposition \ref{prop:Hellinger-KL-Chisq}}

\begin{proof}[Proof of Proposition \ref{prop:Hellinger-KL-Chisq}]
	For any $p = (p_1, \ldots, p_n) \in \Delta_n$, we have
	\begin{equation*}
		\sum_{i = 1}^{n} p_i \log(p_i) + \log(n) \le n \sum_{i = 1}^{n} p_i^2 - 1,
	\end{equation*}
	where the left-hand side is the Kullback-Leibler (KL) divergence of $p$ from $\frac{1}{n} 1_n$ which is known to be bounded by the $\chi^2$ distance of $p$ from $\frac{1}{n} 1_n$ on the right-hand side; see Lemma 2.7 of \cite{tsybakov_2009}. Meanwhile, the KL divergence is bounded below by the Hellinger distance; see Lemma 2.4 of \cite{tsybakov_2009}. Therefore, we have
	\begin{equation*}
		\sum_{i = 1}^{n} p_i \log(p_i) + \log(n) \ge \sum_{i = 1}^{n} (\sqrt{p_i} - \sqrt{1 / n})^2,
	\end{equation*}
	where the right-hand side is the Hellinger distance between $p$ and $\frac{1}{n} 1_n$. The right-hand side can be further lower bounded,
	\begin{equation*} 
		\sum_{i = 1}^{n} (\sqrt{p_i} - \sqrt{1 / n})^2
		= \sum_{i = 1}^{n} \frac{(p_i - 1 / n)^2}{(\sqrt{p_i} + \sqrt{1 / n})^2}
		\ge \frac{n \sum_{i = 1}^{n} p_i^2 - 1}{2 (n \|p\|_\infty + 1)}.	
	\end{equation*}
	Combining these inequalities, we have 
	\begin{equation*}
		\frac{1}{2(n \|p\|_\infty + 1)} \le \frac{\sum_{i = 1}^{n} p_i \log(p_i) + \log(n)}{n \sum_{i = 1}^{n} p_i^2 - 1} \le 1.
	\end{equation*}
\end{proof}

\subsection{Proof of Proposition \ref{prop:minimizer=fixed_point}}
\begin{proof}[Proof of Proposition \ref{prop:minimizer=fixed_point}]
	We have already discussed that \eqref{eq:optim} must admit a unique minimizer $\pi^\star \in \Pi_{a, b}^{+}$. We show that $\pi^\star$ is a unique fixed point of $T_\lambda$, which follows from the Karush-Kuhn-Tucker (KKT) conditions. To see this, rewrite \eqref{eq:optim} as 
	\begin{equation*}
		\begin{aligned}
			\min_{\pi \in \R_+^{n \times m}} \quad & (g(\pi) + \lambda h(\pi)) \\
			\mathrm{subject~to} \quad & \pi 1_m = a \quad \text{and} \quad \pi^\top 1_n = b.
		\end{aligned}
	\end{equation*}
	Then, define the Lagrangian $L \colon \R_+^{n \times m} \times \R^n \times \R^m \to \R$ by
	\begin{equation*}
		L(\pi, \mu, \nu) = g(\pi) + \lambda h(\pi) + \langle \mu, \pi 1_m - a \rangle + \langle \nu, \pi^\top 1_n - b \rangle.
	\end{equation*}
	The KKT conditions for $(\pi, \mu, \nu) \in \R_+^{n \times m} \times \R^n \times \R^m$ are $\pi \in \Pi_{a, b}$ and $\pi$ is a minimizer of $L(\cdot, \mu, \nu)$ over $\R_+^{n \times m}$. Due to $h$, we can see that $L(\cdot, \mu, \nu)$ must admit a unique minimizer at the interior of $\R_+^{n \times m}$, namely, all the entries of the unique minimizer are strictly positive. Accordingly, the minimizer must satisfy the first-order condition: 
	\begin{equation*}
		\nabla_\pi L(\pi, \mu, \nu) = \nabla g(\pi) + \lambda \nabla h(\pi) + \mu 1_m^\top + 1_n \nu^\top = 0,
	\end{equation*}
	which leads to 
	\begin{equation}
		\label{eq:lagrangian_optimality}
		\pi = \exp\left(-\frac{\mu 1_m^\top + 1_n \nu^\top + \nabla g(\pi)}{\lambda}\right).
	\end{equation}
	In summary, the KKT conditions are $\pi \in \Pi_{a, b}$ and \eqref{eq:lagrangian_optimality}, which---by Definition \ref{prop:EOT_Sinkhorn}---leads to the following:
	\begin{equation*}
		\pi = \Phi_\lambda(\nabla g(\pi)) = T_\lambda(\pi).
	\end{equation*}
	Hence, we have shown that $\pi$ is a minimizer of \eqref{eq:optim} if and only if $\pi$ is a fixed point of $T_\lambda$. As we have already proved that \eqref{eq:optim} admits a unique minimizer $\pi^\star$, we conclude that $T_\lambda$ has a unique fixed point $\pi^\star$.
\end{proof}

\subsection{Proof of Theorem \ref{thm:fixed-point_L1}}
Theorem \ref{thm:fixed-point_L1} is based on the following lemma that analyzes the Lipschitz property of the operator $\Phi_\lambda$---defined in Proposition \ref{prop:EOT_Sinkhorn}---w.r.t. the $L^\infty$ metric. 
\begin{lemma}
	\label{lem:EOT_perturbation}
	Fix $\lambda > 0$. Define a map $\Phi_\lambda \colon \R^{n \times m} \to \Pi_{a, b}^{+}$ by 
	\begin{equation*}
		\Phi_\lambda(C) = \argmin_{\pi \in \Pi_{a, b}} \left(\langle C, \pi \rangle + \lambda h(\pi)\right).
	\end{equation*}
	Then, for any $C_1, C_2 \in \R^{n \times m}$, we have
	\begin{equation}
		\label{eq:EOT_perturbation_L1}
		\|\Phi_\lambda(C_1) - \Phi_\lambda(C_2)\|_1 \le \frac{\|C_1 - C_2\|_\infty}{\lambda}.
	\end{equation}
\end{lemma}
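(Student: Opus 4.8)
The plan is to exploit the strong convexity of the entropy $h$ with respect to the $\|\cdot\|_1$ norm and the standard stability of minimizers of a strongly convex program under perturbation of its linear term. Write $\pi_1 = \Phi_\lambda(C_1)$ and $\pi_2 = \Phi_\lambda(C_2)$. As established in the text, both lie in the interior $\Pi_{a, b}^{+}$ because $\nabla h$ diverges at the boundary, so each gradient $\nabla h(\pi_k) = \log(\pi_k)$ is well-defined. Since $\pi \mapsto \langle C_k, \pi \rangle + \lambda h(\pi)$ is convex and differentiable on the relative interior of $\Pi_{a,b}$, each minimizer satisfies the first-order variational inequality $\langle C_k + \lambda \nabla h(\pi_k), \pi - \pi_k \rangle \ge 0$ for every $\pi \in \Pi_{a, b}$.

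First I would record the key geometric fact: on $\Pi_{a, b}$ the Bregman divergence of $h$ coincides with the Kullback--Leibler divergence, because the affine terms cancel once we use $\sum_{i, j} (\pi_1)_{ij} = \sum_{i, j} (\pi_2)_{ij} = 1$ (the marginals $a, b$ are probability vectors). Pinsker's inequality then gives $D_h(\pi, \pi') \ge \tfrac{1}{2} \|\pi - \pi'\|_1^2$, so $h$ is $1$-strongly convex in $\|\cdot\|_1$. Combining the elementary identity
\begin{equation*}
	\langle \nabla h(\pi_1) - \nabla h(\pi_2), \pi_1 - \pi_2 \rangle = D_h(\pi_1, \pi_2) + D_h(\pi_2, \pi_1)
\end{equation*}
with Pinsker applied to each summand yields the monotonicity bound $\langle \nabla h(\pi_1) - \nabla h(\pi_2), \pi_1 - \pi_2 \rangle \ge \|\pi_1 - \pi_2\|_1^2$.

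Next I would instantiate the variational inequality for $\pi_1$ at the test point $\pi = \pi_2$ and the one for $\pi_2$ at $\pi = \pi_1$, then add them. After the $C$-terms and $\nabla h$-terms regroup, this gives
\begin{equation*}
	\langle C_1 - C_2, \pi_2 - \pi_1 \rangle \ge \lambda \langle \nabla h(\pi_1) - \nabla h(\pi_2), \pi_1 - \pi_2 \rangle \ge \lambda \|\pi_1 - \pi_2\|_1^2.
\end{equation*}
On the other hand, Hölder's inequality (the duality of $\|\cdot\|_1$ and $\|\cdot\|_\infty$ on the $nm$ matrix entries) bounds the left-hand side by $\|C_1 - C_2\|_\infty \|\pi_1 - \pi_2\|_1$. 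Chaining the two and dividing by $\|\pi_1 - \pi_2\|_1$ (the case $\pi_1 = \pi_2$ being trivial) delivers \eqref{eq:EOT_perturbation_L1}. The proof is essentially mechanical once the framing is fixed; the only point requiring care is that the minimizers are interior, which legitimizes the gradient-based first-order conditions, and that the correct quantitative device converting KL geometry to an $\|\cdot\|_1$ estimate is Pinsker's inequality rather than, say, a naive Cauchy--Schwarz bound.
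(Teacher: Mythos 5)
Your proposal is correct and follows essentially the same route as the paper's proof: first-order variational inequalities at both minimizers, cross-instantiation and summation, H\"older's inequality in the $\|\cdot\|_\infty$/$\|\cdot\|_1$ duality, and the $1$-strong convexity of $h$ with respect to $\|\cdot\|_1$. The only cosmetic difference is that you derive the strong-convexity bound explicitly via the symmetrized Bregman identity and Pinsker's inequality, while the paper asserts it directly; the substance of the argument is identical.
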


\begin{proof}[Proof of Lemma \ref{lem:EOT_perturbation}]
	We first claim that
	\begin{equation}
		\label{eq:entropic_ot_optimality_direction}
		\langle C_1 + \lambda \nabla h(\Phi_\lambda(C_1)), \pi - \Phi_\lambda(C_1)\rangle \ge 0 \quad \forall \pi \in \Pi_{a, b}.
	\end{equation}
	Suppose not, namely, there exists $\pi \in \Pi_{a, b}$ such that 
	\begin{equation*}
		\langle C_1 + \lambda \nabla h(\Phi_\lambda(C_1)), \pi - \Phi_\lambda(C_1)\rangle < 0.
	\end{equation*}
	In other words, letting $q(\pi) := \langle C_1, \pi \rangle + \lambda h(\pi)$, we have $\pi \in \Pi_{a, b}$ such that 
	\begin{equation*}
		\langle \nabla q(\Phi_\lambda(C_1)), \pi - \Phi_\lambda(C_1) \rangle < 0.
	\end{equation*} 
	As $q$ is differentiable at $\Phi_\lambda(C_1)$ and $q$ is convex on $\Pi_{a, b}$, this means that we can find a point $\pi'$ on the line segment between $\Phi_\lambda(C_1)$ and $\pi$ such that $q(\pi') < q(\Phi_\lambda(C_1))$, which contradicts that $\Phi_\lambda(C_1)$ is a minimizer of $q$ over $\Pi_{a, b}$. Hence, \eqref{eq:entropic_ot_optimality_direction} must hold. Letting $\pi = \Phi_\lambda(C_2)$ in \eqref{eq:entropic_ot_optimality_direction}, we have 
	\begin{equation*}
		\langle C_1 + \lambda \nabla h(\Phi_\lambda(C_1)), \Phi_\lambda(C_2) - \Phi_\lambda(C_1)\rangle \ge 0.
	\end{equation*}
	Changing the role of $C_1$ and $C_2$, 
	\begin{equation*}
		\langle C_2 + \lambda \nabla h(\Phi_\lambda(C_2)), \Phi_\lambda(C_1) - \Phi_\lambda(C_2)\rangle \ge 0.
	\end{equation*}
	Combining the above two inequalities, we have 
	\begin{equation}
		\label{eq:EOT_perturbation_rough}
		\langle \nabla h(\Phi_\lambda(C_1)) - \nabla h(\Phi_\lambda(C_2)), \Phi_\lambda(C_1) - \Phi_\lambda(C_2) \rangle 
		\le 
		- \frac{1}{\lambda} \langle C_1 - C_2, \Phi_\lambda(C_1) - \Phi_\lambda(C_2) \rangle.
	\end{equation}
	By applying H{\"o}lder's inequality to the right-hand side of \eqref{eq:EOT_perturbation_rough}, we have 
	\begin{equation}
		\label{eq:EOT_rough_Holder}
		- \frac{1}{\lambda} \langle C_1 - C_2, \Phi_\lambda(C_1) - \Phi_\lambda(C_2) \rangle \le \frac{\|C_1 - C_2\|_\infty \cdot \|\Phi_\lambda(C_1) - \Phi_\lambda(C_2)\|_1}{\lambda}.
	\end{equation}
	Also, as $h$ is $1$-strongly convex with respect to $\|\cdot\|_1$ on $\Delta_{n, m}$, we have 
	\begin{equation*}
		\langle \nabla h(P_1) - \nabla h(P_2), P_1 - P_2 \rangle \ge \|P_1 - P_2\|_1^2 \quad \forall P_1, P_2 \in \Delta_{n, m},
	\end{equation*}
	which allows us to lower bound the left-hand side of \eqref{eq:EOT_perturbation_rough} as follows:
	\begin{equation}
		\label{eq:EOT_rough_strong_cvx_L1}
		\langle \nabla h(\Phi_\lambda(C_1)) - \nabla h(\Phi_\lambda(C_2)), \Phi_\lambda(C_1) - \Phi_\lambda(C_2) \rangle \ge \|\Phi_\lambda(C_1) - \Phi_\lambda(C_2)\|_1^2.
	\end{equation}
	Combining \eqref{eq:EOT_perturbation_rough}, \eqref{eq:EOT_rough_Holder}, and \eqref{eq:EOT_rough_strong_cvx_L1}, we have \eqref{eq:EOT_perturbation_L1}.
\end{proof}

\begin{proof}[Proof of Theorem \ref{thm:fixed-point_L1}]
	By \eqref{eq:EOT_perturbation_L1} of Lemma \ref{lem:EOT_perturbation}, for any $A_1, A_2 \in \R^{n \times m}$.
	\begin{equation*}
		\begin{split}
			\|T_\lambda(A_1) - T_\lambda(A_2)\|_1 
			& = \|\Phi_\lambda(\nabla g(A_1)) - \Phi_\lambda(\nabla g(A_2))\|_1 \\
			& \le \frac{\|\nabla g(A_1) - \nabla g(A_2)\|_\infty}{\lambda} \\
			& = \frac{\|H (A_1 - A_2)\|_\infty}{\lambda} \\
			& \le \frac{\|H (A_1 - A_2)\|_\infty}{\lambda} \\
			& \le \frac{\|H\|_\infty}{\lambda} \|A_1 - A_2\|_1,			
		\end{split}
	\end{equation*}
	where the last inequality follows from $\|H (A_1 - A_2)\|_\infty \le \|H\|_\infty \|A_1 - A_2\|_1$. Therefore, $T_\lambda$ is a contraction on $\Pi_{a, b}$ equipped with $\|\cdot\|_1$ if $\lambda > L = \|H\|_\infty$. By the Banach–Caccioppoli theorem, $T_\lambda$ has a unique fixed point, and we have \eqref{eq:fixed_point_bound_L1}.
\end{proof}

\subsection{Proof of Theorem \ref{thm:steepest-descent-KL-convergence}}
\begin{proof}[Proof of Theorem \ref{thm:steepest-descent-KL-convergence}]
	We first show the following from \eqref{eqn-steepest-descent}: for any $k \ge 0$ and $\pi \in \Pi_{a, b}$,
	\begin{equation}
		\label{eq:steepest-descent-KL-update-optimality}
		\langle \nabla f(\pi^{(k)}), \pi^{(k + 1)} - \pi \rangle \le \frac{\langle \nabla h(\pi^{(k)}) - \nabla h(\pi^{(k + 1)}), \pi^{(k + 1)} - \pi \rangle}{\tau_k}.
	\end{equation}
	To see this, pick $\pi \in \Pi_{a, b}$ and define $q(\epsilon) = f((1 - \epsilon) \pi^{(k + 1)} + \epsilon \pi) + \tau_k^{-1} D_h((1 - \epsilon) \pi^{(k + 1)} + \epsilon \pi, \pi^{(k)})$ for $\epsilon \in [0, 1]$. As $q(\epsilon) \ge q(0)$ for $\epsilon \in [0, 1]$, the right derivative of $q$ at $\epsilon = 0$ must be nonnegative, which leads to \eqref{eq:steepest-descent-KL-update-optimality}. Next, letting $D_f$ be the Bregman divergence with respect to $f$, we have
	\begin{equation*}
		\begin{split}
			f(\pi^{(k + 1)}) - f(\pi) &= f(\pi^{(k + 1)}) - f(\pi^{(k)}) +  f(\pi^{(k)}) -  f(\pi)\\
			& = D_f(\pi^{(k + 1)}, \pi^{(k)}) + \langle \nabla f(\pi^{(k)}), \pi^{(k + 1)} - \pi^{(k)} \rangle + f(\pi^{(k)}) - f(\pi) \\
			& \le D_f(\pi^{(k + 1)}, \pi^{(k)}) + \langle \nabla f(\pi^{(k)}), \pi^{(k + 1)} - \pi^{(k)} \rangle + \langle \nabla f(\pi^{(k)}), \pi^{(k)} - \pi \rangle \\
			& \le D_f(\pi^{(k + 1)}, \pi^{(k)}) + \frac{\langle \nabla h(\pi^{(k)}) - \nabla h(\pi^{(k + 1)}), \pi^{(k + 1)} - \pi \rangle}{\tau_k}, 
		\end{split}
	\end{equation*}
	where the first inequality uses the convexity of $f$ and the second inequality follows from \eqref{eq:steepest-descent-KL-update-optimality}. For the last term on the right-hand side, we have
	\begin{equation*}
		\begin{split}
			\langle \nabla h(\pi^{(k)}) - \nabla h(\pi^{(k + 1)}), \pi^{(k + 1)} - \pi \rangle 
			& = \langle \log \frac{\pi^{(k)}}{\pi^{(k + 1)}}, \pi^{(k + 1)} - \pi \rangle \\
			& = \langle \log \frac{\pi}{\pi^{(k)}}, \pi \rangle - \langle \log \frac{\pi}{\pi^{(k + 1)}}, \pi \rangle - \langle \log \frac{\pi^{(k + 1)}}{\pi^{(k)}}, \pi^{(k + 1)} \rangle \\
			& = D_h(\pi, \pi^{(k)}) - D_h(\pi, \pi^{(k + 1)}) - D_h(\pi^{(k + 1)}, \pi^{(k)}).
		\end{split}
	\end{equation*}
	Hence, 
	\begin{equation}
		\label{eq:gap_inequality}
		f(\pi^{(k + 1)}) - f(\pi) \le D_f(\pi^{(k + 1)}, \pi^{(k)}) + \frac{D_h(\pi, \pi^{(k)}) - D_h(\pi, \pi^{(k + 1)}) - D_h(\pi^{(k + 1)}, \pi^{(k)})}{\tau_k}.
	\end{equation}
	Meanwhile, 
	\begin{equation*}
		\begin{split}
			D_f(\pi^{(k + 1)}, \pi^{(k)})
			& = \frac{1}{2} \langle \pi^{(k + 1)} - \pi^{(k)}, H (\pi^{(k + 1)} - \pi^{(k)}) \rangle + \lambda D_h(\pi^{(k + 1)}, \pi^{(k)}) \\
			& \le \frac{\|H\|_\infty}{2} \|\pi^{(k + 1)} - \pi^{(k)}\|_1^2 + \lambda D_h(\pi^{(k + 1)}, \pi^{(k)}) \\
			& \le (\|H\|_\infty + \lambda) D_h(\pi^{(k + 1)}, \pi^{(k)}),
		\end{split}
	\end{equation*}
	where the first inequality follows from H{\"o}lder's inequality and the second inequality uses Pinsker's inequality $\|\pi^{(k + 1)} - \pi^{(k)}\|_1^2 \le 2 D_h(\pi^{(k + 1)}, \pi^{(k)})$; see, Lemma 2.5 of \cite{tsybakov_2009}. Now, as $\tau_k^{-1} = \tau^{-1} \ge \|H\|_\infty + \lambda$, we have
	\begin{equation*}
		D_f(\pi^{(k + 1)}, \pi^{(k)}) \le \frac{D_h(\pi^{(k + 1)}, \pi^{(k)})}{\tau},
	\end{equation*}
	which, together with \eqref{eq:gap_inequality}, leads to 
	\begin{equation*}
		f(\pi^{(k + 1)}) - f(\pi) \le \frac{D_h(\pi, \pi^{(k)}) - D_h(\pi, \pi^{(k + 1)})}{\tau} \quad \forall \pi \in \Pi_{a, b}.
	\end{equation*}
	By letting $\pi = \pi^{(k)}$, we have $f(\pi^{(k + 1)}) \le f(\pi^{(k)})$ for all $k \ge 0$. Hence, 
	\begin{equation*}
		f(\pi^{(T)}) - f(\pi) \le \frac{1}{T} \sum_{k = 0}^{T - 1} (f(\pi^{(k + 1)}) - f(\pi)) \le \frac{1}{T} \frac{D_h(\pi, \pi^{(0)}) - D_h(\pi, \pi^{(T)})}{\tau} \le \frac{1}{T} \frac{D_h(\pi, \pi^{(0)})}{\tau}.
	\end{equation*}
	As $f$ always admits a minimizer on $\Pi_{a, b}$ (even when $\lambda = 0$), plugging a minimizer to $\pi$ and using the fact that $D_h(\pi, \pi^{(0)}) \le \log(n m)$, we obtain \eqref{eq:steepest-descent-KL-convergence}. Lastly, 
	\begin{equation*}
		\begin{split}
			f(\pi^{(T)}) - f(\pi^\star) 
			& = \langle \nabla f(\pi^\star), \pi^{(T)} - \pi^\star \rangle + D_f(\pi^{(T)}, \pi^\star) \\
			& = \langle \nabla f(\pi^\star), \pi^{(T)} - \pi^\star \rangle + \frac{1}{2} \langle \pi^{(T)} - \pi^{(k)}, H (\pi^{(T)} - \pi^{(k)}) \rangle + \lambda D_h(\pi^{(T)}, \pi^\star) \\
			& \ge \frac{\lambda}{2} \|\pi^{(T)} - \pi^\star\|_1^2,
		\end{split}
	\end{equation*}
	where the last inequality uses Pinsker's inequality and $\langle \nabla f(\pi^\star), \pi^{(T)} - \pi^\star \rangle \ge 0$, which holds as $\pi^\star$ is a minimizer of $f$. Hence, we have \eqref{eq:convergence_to_minimizer}.
\end{proof}

\end{document}